\documentclass{lmcs}
\pdfoutput=1
\usepackage[utf8]{inputenc}

\usepackage{lastpage}
\lmcsdoi{22}{2}{25}
\lmcsheading{}{\pageref{LastPage}}{}{}%
{Jun.~27,~2025}{Jun.~02,~2026}{}

\title{Univalent Enriched Categories and the Enriched Rezk Completion}

\author{Niels van der Weide\lmcsorcid{0000-0003-1146-4161}}

\address{Institute for Computing and Information Sciences, Radboud University, Nijmegen, The Netherlands}
\email{nweide@cs.ru.nl}

\keywords{enriched categories, univalent categories, homotopy type theory, univalent foundations, Rezk completion}

\newtheoremstyle{linkthm}
  {6pt}
  {6pt}
  {\itshape}
  {}
  {\bfseries}
  {{\bfseries .}}
  {5pt plus 1pt minus 1pt}%
  {\link{\href{#3}{\thmname{#1} \thmnumber{#2}}}}

\hyphenation{bi-categories}

\usepackage{amsthm}
\usepackage{amssymb}
\usepackage{mathtools}
\usepackage{url}
\usepackage{xspace}
\usepackage{xifthen}
\usepackage{tikz-cd}
\usepackage[shortcuts]{extdash}
\usepackage{turnstile}
\usepackage{xargs}
\usepackage{bbold}
\usepackage{hyperref}
\usepackage{cleveref}
\usepackage{bussproofs}

\theoremstyle{linkthm}
\newtheorem{propL}[thm]{Proposition}
\newtheorem{lemL}[thm]{Lemma}
\newtheorem{thrm}[thm]{Theorem}

\crefname{lem}{Lemma}{Lemmata}
\crefname{lemL}{Lemma}{Lemmata}
\crefname{prop}{Proposition}{Propositions}

\theoremstyle{theorem}
\newtheorem{problem}[thm]{Problem}

\crefname{problem}{Problem}{Problems}

\theoremstyle{definition}

\newtheorem{axiom}[thm]{Axiom}

\crefname{defi}{Definition}{Definitions}
\crefname{rem}{Remark}{Remarks}
\crefname{exa}{Example}{Examples}

\newcounter{constructionCounter}
\renewcommand{\theconstructionCounter}{\arabic{section}.\arabic{thm}}
\newenvironment{construction}[2]
{\stepcounter{thm}\refstepcounter{constructionCounter}\noindent{\bfseries\link{\href{#1}{Construction \theconstructionCounter}}} {(for Problem~\ref{#2})}{\bfseries .}\kern-1pt}
  {\qed \vspace{6pt}}

\crefname{constructionCounter}{Construction}{Constructions}
\Crefname{constructionCounter}{Construction}{Constructions}

\newcommand{\UniMath}{\href{https://github.com/UniMath/UniMath}{\nolinkurl{UniMath}}\xspace}
\newcommand{\CoqLink}{\href{https://rocq-prover.org}{\nolinkurl{Rocq}}\xspace}

\newcommand{\coqdocbasebaseurl}{https://nmvdw.github.io/EnrichedCats}

\newcommand{\coqdocbaseurl}{\coqdocbasebaseurl /UniMath.}
\newcommand{\urlhash}{\#}
\newcommand{\coqdocurl}[2]{\coqdocbaseurl #1.html\urlhash #2}
\newcommand{\coqdocurlR}[2]{https://nmvdw.github.io/EnrichedCats/rezk/#1.html\urlhash #2}
\newcommand{\nolinkcoqident}[1]{\nolinkurl{#1}}
\makeatletter
\newcommand{\coqident}{\begingroup\@makeother\#\@coqident}
\newcommand{\@coqident}[3][]{%
	\ifthenelse{\isempty{#2}}%
	{\nolinkcoqident{#3}}%
	{\ifthenelse{\isempty{#1}}%
		{\href{\coqdocurl{#2}{#3}}{\nolinkcoqident{#3}}}%
		{\href{\coqdocurl{#2}{#3}}{\nolinkcoqident{#1}}}}%
	\endgroup}

\makeatother

\newcommand{\link}[1]{\textcolor{blue}{#1}}
\newcommand{\conceptDef}[3]{\href{\coqdocurl{#2}{#3}}{\link{\textbf{#1}}}}
\newcommand{\constfont}[1]{\ensuremath{\mathsf{#1}}}
\newcommand{\cat}[1]{\ensuremath{\constfont{#1}}\xspace}

\newcommand{\defeq}{\coloneqq}
\newcommand{\trunc}[1]{\mathopen{}\left\Vert #1\right\Vert \mathclose{}}
\newcommand{\unitt}{\cat{unit}}
\newcommand{\unittel}{\mathbf{\cat{tt}}}
\newcommand{\bool}{\cat{bool}}

\newcommand{\falseB}{\cat{false}}
\newcommand{\IfThenElse}[3]{\cat{if} \> #1 \> \cat{then} \> #2 \> \cat{else} \> #3}
\newcommand{\sigmatype}[3]{\sum(#1 : #2), #3}
\newcommand{\pitype}[3]{\prod(#1 : #2), #3}
\newcommand{\lambdatm}[2]{\lambda #1. #2}

\newcommand{\app}[2]{#1 \> #2}
\newcommand{\apd}[2]{\app{\cat{apd}_{#1}}{#2}}
\newcommand{\Pred}{\cat{P}}
\newcommand{\X}{\cat{X}}
\newcommand{\Y}{\cat{Y}}
\newcommand{\Z}{\cat{Z}}
\newcommand{\UnivU}{\mathcal{U}}
\newcommand{\UnivV}{\mathcal{V}}
\newcommand{\maxU}[2]{#1 \sqcup #2}
\newcommand{\idpath}[1][]{\cat{idpath}_{#1}}

\newcommand{\idpathD}[1][]{\overline{\cat{idpath}}_{#1}}

\newcommand{\id}[1][]{\operatorname{id}_{#1}}
\newcommand{\comp}[2]{#1 \mathbin{\cdot} #2}

\newcommand{\B}{\cat{B}}
\newcommand{\C}{\cat{C}}

\newcommand{\V}{\cat{V}} 
\newcommand{\w}{w}
\newcommand{\x}{x}
\newcommand{\y}{y}
\newcommand{\z}{z}
\newcommand{\f}{f}
\newcommand{\g}{g}
\newcommand{\h}{h}

\newcommand{\xx}{\overline{\x}}
\newcommand{\yy}{\overline{\y}}
\newcommand{\zz}{\overline{\z}}

\newcommand{\pp}{\overline{p}}
\newcommand{\qq}{\overline{q}}
\newcommand{\Set}{\cat{Set}}
\newcommand{\FSub}[1]{\cat{FSub}(#1)}
\newcommand{\Op}[1]{#1^{\cat{op}}}
\newcommand{\Dialg}[2]{\cat{Dialg}(#1, #2)}
\newcommand{\EFunctor}[2]{[ #1 , #2 ]}
\newcommand{\IsoComma}[2]{#1 /_{\cong} #2}
\newcommand{\FKleisli}[1]{\cat{K}(#1)}
\newcommand{\Kleisli}[1]{\cat{Kleisli}(#1)}
\newcommand{\EM}[1]{\cat{EM}(#1)}
\newcommand{\freealg}[1]{\cat{FreeAlg}_{#1}}
\newcommand{\kleislifunctor}[1]{\cat{incl}_{#1}}

\newcommand{\iso}[2]{#1 \cong #2}

\newcommand{\F}{\cat{F}}
\newcommand{\G}{\cat{G}}

\newcommand{\T}{\cat{T}}
\newcommand{\unitM}[1]{\eta_{#1}}
\newcommand{\muM}[1]{\mu_{#1}}

\newcommand{\nt}{\tau}

\newcommand{\munit}[1][]{\mathbb{1}_{#1}}
\newcommand{\mult}[2]{#1 \otimes #2}
\newcommand{\mlunitImpl}[1]{\cat{l}_{#1}}
\newcommand{\mlunit}[1]{\cat{l}}
\newcommand{\mrunitImpl}[1]{\cat{r}_{#1}}
\newcommand{\mrunit}[1]{\cat{r}}
\newcommand{\mlinvunitImpl}[1]{\cat{l}^{-1}_{#1}}
\newcommand{\mlinvunit}[1]{\cat{l}^{-1}}
\newcommand{\mrinvunitImpl}[1]{\cat{r}^{-1}_{#1}}
\newcommand{\mrinvunit}[1]{\cat{r}^{-1}}
\newcommand{\massocImpl}[3]{\cat{a}_{#1, #2, #3}}
\newcommand{\massoc}[3]{\cat{a}}

\newcommand{\minvassoc}[3]{\cat{a}^{-1}}
\newcommand{\msym}[1][]{\cat{s}_{#1}}
\newcommand{\mhom}[2]{#1 \multimap #2}
\newcommand{\mlam}[1]{\lambda(#1)}
\newcommand{\mevalImpl}[2]{\epsilon_{#1, #2}}
\newcommand{\meval}[2]{\epsilon}

\newcommand{\mfununit}[1]{\epsilon_{#1}}
\newcommand{\mfunmult}[1]{\mu_{#1}}
\newcommand{\mfunchange}[2]{\zeta_{#1}(#2)}

\newcommand{\Ec}{\mathcal{E}}
\newcommand{\Ef}{\mathcal{F}}
\newcommand{\Eg}{\mathcal{G}}
\newcommand{\Eh}{\mathcal{H}}

\newcommand{\Ehom}[3]{#1(#2, #3)}
\newcommand{\EidImpl}[2][]{\cat{id}^{\cat{e}}_{#1}(#2)}
\newcommand{\Eid}[2][]{\cat{id}^{\cat{e}}}
\newcommand{\EcompImpl}[4][]{\cat{comp}_{#1}(#2, #3, #4)}
\newcommand{\Ecomp}[4][]{\cat{comp}}
\newcommand{\EFromArr}[1]{\overrightarrow{#1}}
\newcommand{\EToArr}[1]{\overleftarrow{#1}}
\newcommand{\EPrecomp}[1]{#1^{\cat{pre}}}
\newcommand{\EPostcomp}[1]{#1^{\cat{post}}}

\newcommand{\EfunImpl}[3]{#1(#2, #3)}
\newcommand{\Efun}[3]{#1}

\newcommand{\self}[1]{\cat{self}(#1)}
\newcommand{\FSubE}[1]{\cat{FSub}_{\cat{e}}(#1)}
\newcommand{\OpE}[1]{#1^{\cat{op}}}
\newcommand{\DialgE}[2]{\cat{Dialg}_{\cat{e}}(#1, #2)}
\newcommand{\Change}[2]{#1^*(#2)}
\newcommand{\EFunctorE}[2]{\cat{EFunctor}(#1 , #2)}
\newcommand{\ImE}[1]{\cat{Im}(#1)}
\newcommand{\EME}[1]{\cat{EM}_{\cat{e}}(#1)}
\newcommand{\FKleisliE}[1]{\cat{K}_{\cat{e}}(#1)}
\newcommand{\KleisliE}[1]{\cat{Kleisli}_{\cat{e}}(#1)}
\newcommand{\UnitE}[0]{\mathcal{I}}

\newcommand{\Struct}{\mathcal{S}}
\newcommand{\StructOb}[1]{\cat{P}_{#1}}
\newcommand{\StructOnOb}[1]{\cat{p}_{#1}}
\newcommand{\StructMor}[3]{\cat{H}_{(#1, #2)}(#3)}
\newcommand{\StructUnit}{\cat{p}_{\unitt}}
\newcommand{\StructProd}[2]{#1 \times #2}
\newcommand{\StructHom}[2]{#1 \Rightarrow #2}
\newcommand{\StructCat}[1]{\cat{Str}(#1)}

\newcommand{\StructCatSmash}[1]{\cat{Str}_{\wedge}(#1)}
\newcommand{\SmashEqRel}{\sim_{\wedge}}
\newcommand{\SmashMap}[1]{\app{\cat{map}_{\wedge}}{#1}}
\newcommand{\StructBool}[1]{#1_{\bool}}
\newcommand{\SmashStruct}[3]{#1 \wedge_{#3} #2}
\newcommand{\StructSmash}[2]{#1 \wedge #2}
\newcommand{\StructFunSmash}[2]{\cat{hom}_{*}(#1 , #2)}
\newcommand{\StructHomSmash}[2]{#1 \Rightarrow_* #2}
\newcommand{\SmashCurry}{\cat{curry}}
\newcommand{\SmashUncurry}{\cat{uncurry}}

\newcommand{\HomStruct}[2]{\cat{hom}_{(#1, #2)}}
\newcommand{\DCPOStruct}{\cat{DCPO}}
\newcommand{\DCPPOStruct}{\cat{DCPO}_{\bot}}

\newcommand{\PointedSet}{\cat{Set}_*}
\newcommand{\PointedPoset}{\cat{Poset}_{\bot}}

\newcommand{\UnivCat}{\cat{UnivCat}}
\newcommand{\EnrichCat}[1]{\cat{EnrichCat}_{#1}}
\newcommand{\dEnrichCat}[1]{\cat{dEnrichCat}_{#1}}
\newcommand{\Und}[1]{\cat{Und}_{#1}}

\newcommand{\orthogonal}[2]{#1 \mathrel{\bot} #2}
\newcommand{\OFunc}[2]{\cat{O}_{(#1, #2)}}
\newcommand{\Left}{\mathcal{L}}
\newcommand{\Right}{\mathcal{R}}

\newcommand{\RepFun}[1]{\cat{r}_0(#1)}
\newcommand{\RepNat}[1]{\cat{r}_1(#1)}
\newcommand{\Yon}[1]{\cat{y}_{#1}}

\newcommand{\Rezk}[1]{\mathcal{R}(#1)}
\newcommand{\RezkFun}[1][]{\cat{P}_{#1}}

\newcommand{\RezkH}[1]{\mathcal{R}^{\mathcal{H}}(#1)}

\newcommand{\rcl}[1]{\cat{rcl}_{#1}}
\newcommand{\rcleq}[1]{\cat{rcleq}_{#1}}
\newcommand{\re}{\cat{re}}
\newcommand{\rconcat}{\cat{rconcat}}

\newcommand{\rInd}[5]{\cat{ind}_{\mathcal{R}}(#1, #2, #3, #4, #5)}

\newcommand{\onecell}{\rightarrow}
\newcommand{\twocell}{\Rightarrow}
\newcommand{\homC}[3]{\underline{#1}(#2, #3)}
\newcommand{\precomp}[1]{#1^{\cat{pre}}}
\newcommand{\postcomp}[1]{#1^{\cat{post}}}

\newcommand{\whiskerl}{\vartriangleleft}
\newcommand{\whiskerr}{\vartriangleright}

\newcommand{\lassociatorfull}[3]{\alpha_{#1,#2,#3}}
\newcommand{\lunitor}[1]{\lambda}
\newcommand{\linvunitor}[1]{\lambda^{-1}}
\newcommand{\runitor}[1]{\rho}
\newcommand{\rinvunitor}[1]{\rho^{-1}}
\newcommand{\lassociator}[3]{\alpha}
\newcommand{\rassociator}[3]{\alpha^{-1}}

\newcommand{\monadob}[1]{\cat{ob}_{#1}}
\newcommand{\monadendo}[1]{#1}
\newcommand{\monadendofull}[1]{\cat{mor}_{#1}}
\newcommand{\monadunit}[1]{\cat{\eta}_{#1}}
\newcommand{\monadmult}[1]{\cat{\mu}_{#1}}

\newcommand{\klob}[1]{\cat{ob}_{#1}}
\newcommand{\klmor}[1]{\cat{mor}_{#1}}
\newcommand{\klcell}[1]{\cat{cell}_{#1}}

\newcommand{\klumpmor}[1]{\cat{Kl}_{\cat{mor}}(#1)}
\newcommand{\klumpcom}[1]{\cat{Kl}_{\cat{com}}(#1)}
\newcommand{\klumpcell}[1]{\cat{Kl}_{\cat{cell}}(#1)}

\newenvironment{bprooftree}
{\leavevmode\hbox\bgroup}
{\DisplayProof\egroup}

\begin{document}

\maketitle

\begin{abstract}
Enriched categories are categories whose sets of morphisms are enriched with extra structure.
Such categories play a prominent role in the study of higher categories, homotopy theory, and the semantics of programming languages.
In this paper, we study univalent enriched categories.
We prove that all essentially surjective and fully faithful functors between univalent enriched categories are equivalences,
and we show that every enriched category admits a Rezk completion.
Finally, we use the Rezk completion for enriched categories to construct univalent enriched Kleisli categories.
\end{abstract}

\section{Introduction}
\label{sec:intro}
Over the years, category theory \cite{mac2013categories} has established itself as a powerful mathematical framework with a wide variety of applications.
The applications of category theory range from pure mathematics \cite{MR2840650,MR1269324} to computer science \cite{Moggi89,plotkin2002notions,plotkin2003algebraic,Power00}.
This study resulted in the development of various notions of categories.

One of these notions is given by \emph{enriched categories}.
Enriched categories are categories whose morphisms are equipped with additional structure.
Examples of such categories are plentiful.
For instance, in the study of the semantics of effectful programming languages,
one uses categories enriched over \emph{directed complete partial orders} (DCPOs) \cite{plotkin2002notions,plotkin2003algebraic,Power00}.
The type of morphisms in categories enriched over DCPOs is given by a DCPO,
and thus fixpoint equations of morphisms can be solved in such categories \cite{Wand79}.
For similar purposes, categories enriched over partial orders have been used \cite{McDermottM22}.
Other applications of enriched categories appear in homological algebra \cite{MR1269324}
where one is interested in categories enriched over abelian groups,
abstract homotopy theory \cite{MR2840650}
where one looks at categories enriched over simplicial sets,
and higher category theory \cite{MR1883478}
where one considers categories enriched over categories.

\subsection*{Univalent Foundations.}
Throughout this paper, we work in univalent foundations \cite{rijke2022introduction,hottbook}.
Univalent foundations is an extension of dependent Martin-L\"of Type Theory \cite{MLTT} with the \emph{univalence axiom}.
This axiom says that the identity of types is the same as equivalences between them.
More specifically, we have a map that sends identities $A = B$ to equivalences $A \simeq B$,
and the univalence axiom states that this map is an equivalence.
Concretely, this means that properties of types are invariant under equivalence
and that two types share the same properties whenever we have an equivalence between them.

Univalent foundations is especially interesting for the study of category theory.
In category theory, objects are generally viewed up to isomorphism:
whenever there is an isomorphism between two objects, they share the same categorical properties.
This is known as the \emph{principle of equivalence},
and this principle is made precise using \emph{univalent categories}.

\subsection*{Univalent Categories.}
In univalent foundations, the ``correct'' notion of category is given by univalent categories.
Given two objects $\x$ and $\y$ in a category,
we have a map sending identities $\x = \y$ to isomorphisms $\iso{\x}{\y}$.
In a univalent category, this mapping is required to be an equivalence:
identities between objects are thus the same as isomorphisms.
Hence, whenever two objects are isomorphic, they satisfy the same properties.
Semantically, this is the correct notion of category,
because in the simplicial set model, univalent categories correspond to set-theoretic categories \cite{simpset}.

There are several consequences of univalence for categories.
For instance, we have a \emph{structure identity principle} for univalent categories.
This principle says that the identity type of two categories is equivalent to the type of adjoint equivalences between them \cite{rezk_completion}.
As a consequence, one gets that whenever two categories are equivalent, then they have the same properties.
Another consequence is that every essentially surjective fully faithful functor 
is an adjoint equivalence as well.
Usually, one uses the axiom of choice to prove this principle,
but if the domain is univalent, then one can constructively prove this fact.
Finally, every category is weakly equivalent to a univalent one, called its \emph{Rezk completion} \cite{rezk_completion}.

While most categories that one encounters in practice are univalent (e.g., Eilenberg-Moore categories and functor categories),
some are not.
An example is given by the Kleisli category.
Usually, the Kleisli category $\FKleisli{\T}$ of a monad $\T$ on a category $\C$ is defined to be the category
whose objects are objects of $\C$
and whose morphisms from $\x$ to $\y$ are morphisms $\x \onecell \app{\T}{\y}$ in $\C$.
However, this does not give rise to a univalent category in general.
One can give an alternative presentation of the Kleisli category as a full subcategory $\Kleisli{\T}$ of the Eilenberg-Moore category to obtain a univalent category \cite{univalence-principle}.
To prove the desired theorems about $\Kleisli{\T}$, one uses that it is the Rezk completion of $\FKleisli{\T}$ \cite{Weide23}.

\subsection*{Univalent Enriched Categories.}
In this paper, we develop enriched category theory in univalent foundations.
More specifically, we define univalent enriched categories,
and we prove analogous theorems for univalent enriched categories as for univalent categories.
We show that univalent enriched categories satisfy a structure identity principle,
that every essentially surjective fully faithful functor is an adjoint equivalence,
and that every enriched category admits a Rezk completion.
We also use these theorems to construct univalent enriched Kleisli categories.

\subsection*{Related work.}
While there are numerous libraries that contain a formalization of categories,
enriched categories have gotten less attention.
Several libraries,
such as Agda categories \cite{HuC21} using Agda \cite{norell2009dependently},
mathlib \cite{X20} using Lean \cite{demoura:2015},
and the category-theory library \cite{coq:cat-th} in \CoqLink \cite{Coq:manual},
contain a couple of basic definitions.
In \UniMath, Satoshi Kura also formalized several basic concepts of enriched categories.
However, none of the aforementioned formalizations consider much of the theory of enriched categories,
and they do not consider univalent enriched categories.
In addition, we use enrichments (\cref{def:enrichment}), while the other formalizations use the definition as given by Kelly \cite{kelly1982basic}.
Enrichments have been used in the setting of skew-enriched categories \cite{MR3800887},
and, with a slightly different definition, in the study of strong monads \cite[Definition 5.1]{abs-2207-00851}.

\subsection*{Formalization.}
The results in this paper are formalized in the \CoqLink \cite{Coq:manual} proof assistant using the \UniMath library \cite{UniMath}.
We use the \UniMath library in this work,
because we frequently use notions from bicategory theory that have only been formalized in UniMath up to now.
\Cref{constr:enriched-rezk-completion-HIT} is not integrated in the \UniMath repository, but it is available in another repository\footnote{\url{https://github.com/nmvdw/RezkCompletion}}.
Definitions and theorems are accompanied with links to their corresponding identifier in the formalization,
and these links are \link{blue}.
The tool \texttt{coqwc} reports the following number of lines of code in the formalization.
\begin{verbatim}
     spec    proof comments
    20812     9237      558 total
\end{verbatim}
Besides what is discussed in this paper,
the formalization also contains (weighted) limits and colimits in enriched categories
and models of the enriched effect calculus \cite{EggerMS14}.

\subsection*{Version History.}
This paper is an extended version of the conference paper ``Univalent Enriched Categories and the Enriched Rezk Completion'' published at FSCD~\cite{vanderweide:2024b}.
Compared to that version the following changes have been made.

\begin{itemize}
  \item In \Cref{def:enrichment}, some laws were missing, although they were present in the formalization.
    Now all necessary laws are present in \Cref{def:enrichment}.
  \item We added sketches for the proofs of \Cref{prop:inv2cell-enriched} and \Cref{thm:univalence-principle-enriched-cats}.
  \item Two examples are added in \Cref{sec:examples},
    namely the unit (\Cref{exa:unit-enriched}) and the tensor of enriched categories (\Cref{exa:tensor-enriched}).
    In addition, we explain that these enriched categories are not necessarily univalent.
  \item \Cref{def:structure} has been expanded to include Cartesian closed structures.
  \item We added \Cref{sec:smash-products}, in which we define a notion of structure that support the smash product,
    and we characterize enrichments over such structures.
  \item \Cref{sec:enriched-rezk} has been rewritten to include a construction of the Rezk completion using higher inductive types (\cref{constr:enriched-rezk-completion-HIT})
    and an explanation of the universal property of the Rezk completion.
    In addition, we added a discussion about enriched profunctors,
    and we argue why to construct the monoidal bicategory of enriched categories,
    one needs to use higher inductive types.
\end{itemize}

\subsection*{Contributions and Overview.}
The contributions of this paper are as follows.
\begin{itemize}
  \item A construction of the bicategory of univalent enriched categories (\cref{def:disp-bicat-enrichment})
    and a proof that this bicategory is univalent (\cref{thm:univalence-principle-enriched-cats});
  \item a construction of the image factorization system of enriched categories (\cref{constr:image-factorization});
  \item a proof that all fully faithful and essentially surjective enriched functors are adjoint equivalences (\cref{thm:enriched-factorization});
  \item two constructions of the Rezk completion for enriched categories (\cref{constr:enriched-rezk-completion,constr:enriched-rezk-completion-HIT})
    and a proof of their universal property (\cref{thm:rezk-completion-ump});
  \item a construction of Kleisli objects (\cref{constr:enriched-kleisli-cat}) in the bicategory of univalent enriched categories.
\end{itemize}
In \cref{sec:examples}, we discuss numerous examples of enriched categories.

\section{The Bicategory of Enriched Categories}
\label{sec:bicat-enriched-cat}
In the remainder of this paper, we study univalent enriched categories,
and in this section, we discuss a structure identity principle for univalent enriched categories,
which says that identity of univalent enriched categories is the same as equivalence.
Before we do so, we briefly recall \emph{monoidal categories} to fix the notation for the remainder of the paper \cite{DBLP:conf/cpp/AhrensMWW24,mac2013categories}.

\begin{defi}
\label{def:moncat}
A \conceptDef{monoidal category}{CategoryTheory.Monoidal.Categories}{monoidal_cat} consists of a category $\V$ together with
\begin{itemize}
  \item an object $\munit[\V] : \V$;
  \item a bifunctor $\mult{-}{-} : \V \times \V \onecell \V$;
  \item isomorphisms $\mlunitImpl{\x} : \mult{\munit[\V]}{\x} \onecell \x$, $\mrunitImpl{\x} : \mult{\x}{\munit[V]} \onecell \x$,
    and $\massocImpl{\x}{\y}{\z} : \mult{(\mult{\x}{\y})}{\z} \onecell \mult{\x}{(\mult{\y}{\z})}$;
\end{itemize}
such that $\mlunitImpl{-}$, $\mrunitImpl{-}$, and $\massocImpl{-}{-}{-}$ are natural and such that the triangle and pentagon laws hold.

A \conceptDef{symmetric monoidal category}{CategoryTheory.Monoidal.Structure.Symmetric}{sym_monoidal_cat} is a monoidal category $\V$ together with morphisms $\msym[\x,\y] : \mult{\x}{\y} \rightarrow \mult{\y}{\x}$,
such that $\msym[\x,\y]$ and $\msym[\y,\x]$ are inverses, $\msym[-,-]$ is natural, and satisfies the hexagon law.

A \conceptDef{symmetric monoidal closed category}{CategoryTheory.Monoidal.Structure.Closed}{sym_mon_closed_cat} is a symmetric monoidal category $\V$
such that for every $\x : \V$ the functor $\mult{x}{-}$ has a right adjoint.
\end{defi}

If we have a symmetric monoidal closed category,
then we have internal homs $\mhom{\x}{\y}$.
We also have evaluation morphisms $\mevalImpl{\x}{\y} : \mult{(\mhom{\x}{\y})}{\x} \onecell \y$,
and an internal lambda abstraction operation $\mlam{\f} : x \onecell \mhom{y}{z}$ for every $\f : \mult{\x}{\y} \onecell z$.

Usually, a $\V$-enriched category consists of a collection of objects together with a hom-object $\Ehom{\Ec}{\x}{\y}$ in $\V$ for all $\x$ and $\y$,
such that we have appropriate identity morphisms and a composition operation \cite{kelly1982basic}.
Every enriched category $\Ec$ has an underlying category, which has the same collection of objects
and whose morphisms from $\x$ to $\y$ are the same as morphisms $\munit \onecell \Ehom{\Ec}{\x}{\y}$ in $\V$.

However, we take a slightly different approach: we use \emph{enrichments}.
An enrichment for a category $\C$ consists of a hom-object $\Ehom{\Ec}{\x}{\y}$ in $\V$ for all $\x$ and $\y$,
such that we have the appropriate identity and composition morphisms
and such that $\C$ is equivalent to the underlying category of the corresponding enriched category.
As such, we view an enriched category as a category together with extra structure.
This viewpoint also determines our notion of univalence for enriched categories:
a univalent enriched category is an enriched category such that its underlying category is univalent.
Note that our notion of univalence is similar to completeness for enriched $\infty$-categories \cite{MR3345192}.
Using enrichments, we can equivalently phrase univalent enriched categories as a univalent category together with an enrichment.

\begin{defi}
\label{def:enrichment}
Let $\V$ be a monoidal category and let $\C$ be a category.
A \conceptDef{$\V$-enrichment}{CategoryTheory.EnrichedCats.Enrichment}{enrichment} $\Ec$ for $\C$ consists of
\begin{itemize}
  \item for all objects $\x, \y : \C$ an object $\Ehom{\Ec}{\x}{\y} : \V$;
  \item for every object $\x : \C$ a morphism $\EidImpl[\Ec]{\x} : \munit[V] \onecell \Ehom{\Ec}{\x}{\x}$;
  \item for all objects $\x, \y, \z : \C$ a morphism $\EcompImpl[\Ec]{\x}{\y}{\z} : \mult{\Ehom{\Ec}{\y}{\z}}{\Ehom{\Ec}{\x}{\y}} \onecell \Ehom{\Ec}{\x}{\z}$;
  \item for all morphisms $\f : \x \onecell \y$ in $\C$ a morphism $\EFromArr{\f} : \munit[V] \onecell \Ehom{\Ec}{\x}{\y}$;
  \item for every morphism $\f : \munit[V] \onecell \Ehom{\Ec}{\x}{\y}$ a morphism $\EToArr{\f} : \x \onecell \y$ in $\C$.
\end{itemize}
In addition, we require that $\EFromArr{\EToArr{\f}} = \f$ and $\EToArr{\EFromArr{\f}} = \f$,
and that the following diagrams commute.
\[
  \begin{tikzcd}[column sep = 4em]
    {\mult{\munit}{\Ehom{\Ec}{\x}{\y}}} & {\mult{\Ehom{\Ec}{\y}{\y}}{\Ehom{\Ec}{\x}{\y}}} \\
    & {\Ehom{\Ec}{\x}{\y}}
    \arrow["{\mult{\EidImpl{\y}}{\id}}", from=1-1, to=1-2]
    \arrow["{\EcompImpl{\x}{\y}{\y}}", from=1-2, to=2-2]
    \arrow["{\mlunitImpl{\Ehom{\Ec}{\x}{\y}}}"', from=1-1, to=2-2]
  \end{tikzcd}
  \quad \quad
  \begin{tikzcd}[column sep = 4em]
    {\mult{\Ehom{\Ec}{\x}{\y}}{\munit}} & {\mult{\Ehom{\Ec}{\x}{\y}}{\Ehom{\Ec}{\x}{\x}}} \\
    & {\Ehom{\Ec}{\x}{\y}}
    \arrow["{\mult{\id{}}{\EidImpl{\x}}}", from=1-1, to=1-2]
    \arrow["{\EcompImpl{\x}{\x}{\y}}", from=1-2, to=2-2]
    \arrow["{\mrunitImpl{\Ehom{\Ec}{\x}{\y}}}"', from=1-1, to=2-2]
  \end{tikzcd}
\]
\[
  \begin{tikzcd}
    {\mult{(\mult{\Ehom{\Ec}{\y}{\z}}{\Ehom{\Ec}{\x}{\y}})}{\Ehom{\Ec}{\w}{\x}}} & {\mult{\Ehom{\Ec}{\y}{\z}}{(\mult{\Ehom{\Ec}{\x}{\y}}{\Ehom{\Ec}{\w}{\x}})}} \\
    & {\mult{\Ehom{\Ec}{\y}{\z}}{\Ehom{\Ec}{\w}{\y}}} \\
    {\mult{\Ehom{\Ec}{\x}{\z}}{\Ehom{\Ec}{\w}{\x}}} & {\Ehom{\Ec}{\w}{\z}}
    \arrow["{\massoc{}{}{}}", from=1-1, to=1-2]
    \arrow["{\mult{\id}{\EcompImpl{\w}{\x}{\y}}}", from=1-2, to=2-2]
    \arrow["{\EcompImpl{\w}{\y}{\z}}", from=2-2, to=3-2]
    \arrow["{\mult{\EcompImpl{\x}{\y}{\z}}{\id}{}}"', from=1-1, to=3-1]
    \arrow["{\EcompImpl{\w}{\x}{\z}}"', from=3-1, to=3-2]
  \end{tikzcd}
\]
Finally,
we require that $\EToArr{\EidImpl{\x}} = \id{x}$
and that $\f \cdot \g = \EToArr{\h}$
for all $\f : \x \rightarrow \y$ and $\g : \y \onecell \z$
where $\h$ is the following composition of morphisms.
\[
\begin{tikzcd}[column sep = 4em]
  {\munit[V]} & {\mult{\munit[V]}{\munit[V]}} & {\mult{\Ehom{\Ec}{\y}{\y}}{\Ehom{\Ec}{\x}{\y}}} & {\Ehom{\Ec}{\x}{\z}}
  \arrow["\mlinvunit{}", from=1-1, to=1-2]
  \arrow["{\mult{\EFromArr{\f}}{\EFromArr{\g}}}", from=1-2, to=1-3]
  \arrow["\EcompImpl{\x}{\y}{\z}", from=1-3, to=1-4]
\end{tikzcd}
\]
\end{defi}

When it is clear from the context, we leave the arguments of $\Ecomp{}{}{}$ and $\Eid{}$ implicit.
In addition, note that the morphism $\Eid{}$ is redundant in \cref{def:enrichment},
since we have that $\EidImpl[\Ec]{\x} = \EFromArr{\id}$.
However, we decided to keep $\Eid{}$ in the definition,
because then it is slightly more convenient to prove \cref{prop:enrichment-equiv}.

In the remainder, we use the following operations for $\V$-enrichments $\Ec$ for a category $\C$.
Given an object $\w$ and a morphism $\f : \x \onecell \y$, we define $\EPrecomp{\f}$ as the following composition.
\[
\begin{tikzcd}[column sep = 4em]
  {\Ehom{\Ec}{\w}{\x}} & {\mult{\munit}{\Ehom{\Ec}{\w}{\x}}} & {\mult{\Ehom{\Ec}{\x}{\y}}{\Ehom{\Ec}{\w}{\x}}} & {\Ehom{\Ec}{\w}{\y}}
  \arrow["{\mlinvunitImpl{}}", from=1-1, to=1-2]
  \arrow["{\mult{\EFromArr{\f}}{\id{}}}", from=1-2, to=1-3]
  \arrow["{\Ecomp{\w}{\x}{\y}}", from=1-3, to=1-4]
\end{tikzcd}
\]
For objects $\z$ and morphisms $\f : \x \onecell \y$, we define $\EPostcomp{\f}$ as the following composition.
\[
\begin{tikzcd}[column sep = 4em]
  {\Ehom{\Ec}{\y}{\z}} & {\mult{\Ehom{\Ec}{\y}{\z}}{\munit}} & {\mult{\Ehom{\Ec}{\y}{\z}}{\Ehom{\Ec}{\x}{\y}}} & {\Ehom{\Ec}{\x}{\z}}
  \arrow["{\mrinvunitImpl{}}", from=1-1, to=1-2]
  \arrow["{\mult{\id}{\EFromArr{\f}}}", from=1-2, to=1-3]
  \arrow["{\Ecomp{\x}{\y}{\z}}", from=1-3, to=1-4]
\end{tikzcd}
\]

Note that a category together with a $\V$-enrichment is the same as an enriched category as defined by Kelly \cite{kelly1982basic}.

\begin{propL}[\coqdocurl{CategoryTheory.EnrichedCats.Enriched.EnrichmentEquiv}{enriched_precat_weq_cat_with_enrichment}]
\label{prop:enrichment-equiv}
For a monoidal category $\V$,
the type of categories together with a $\V$-enrichment
is equivalent to the type of $\V$-enriched categories.
\end{propL}
  
The reason why we use enrichments over the usual definition,
is because it simplifies the proof of the structure identity principle for enriched categories.
A structure identity principle is already present for univalent categories \cite[Theorem 6.17]{rezk_completion},
which can be reused directly if we define enriched categories as pairs of univalent categories together with an enrichment.
However, if we would use the definition in \cite{kelly1982basic} instead,
then reusing this principle would be more cumbersome
and thus the desired proof would be more involved.

To phrase the structure identity principle for enriched categories,
we use \emph{univalent bicategories}.
More specifically, this principle for enriched categories is expressed by saying that the bicategory of enriched categories is univalent.
We define the bicategory of enriched categories using \emph{displayed bicategories} \cite{bicatspaper}.
A displayed bicategory over a bicategory $\B$ represents structures and properties to be added to objects, 1-cells, and 2-cells in $\B$.
In our case, we define a displayed bicategory $\dEnrichCat{\V}$ over the bicategory $\UnivCat$ of univalent categories,
and then $\EnrichCat{\V}$ is total bicategory of $\dEnrichCat{\V}$.
The displayed objects over a univalent category $\C$ are $\V$-enrichments for $\C$,
the displayed 1-cells over a functor are enrichments for functors,
and the displayed 2-cells over a natural transformation are proofs that this transformation is enriched.

Note that from the machinery of displayed bicategories,
we get a pseudofunctor $\Und{\V} : \EnrichCat{\V} \onecell \UnivCat$,
which sends every enriched category to its underlying category.
Using this pseudofunctor, we can understand an enrichment for $\C$ to be an object in the fiber of $\C$ along $\Und{\V}$.

\begin{defi}
\label{def:functor-enrichment}
Suppose that we have $\V$-enrichments $\Ec_1$ and $\Ec_2$ for $\C_1$ and $\C_2$ respectively.  
A \conceptDef{$\V$-enrichment}{CategoryTheory.EnrichedCats.EnrichmentFunctor}{functor_enrichment} $\Ef$ for a functor $\F : \C_1 \onecell \C_2$ from $\Ec_1$ to $\Ec_2$
is a family of morphisms $\EfunImpl{\Ef}{\x}{\y} : \Ehom{\Ec_1}{\x}{\y} \onecell \Ehom{\Ec_2}{\app{\F}{\x}}{\app{\F}{\y}}$ such that the following diagrams commute.
\[
  \begin{tikzcd}
    \munit & {\Ehom{\Ec_1}{\x}{\x}} \\
    & {\Ehom{\Ec_2}{\app{\F}{\x}}{\app{\F}{\x}}}
    \arrow["{\Eid{\x}}", from=1-1, to=1-2]
    \arrow["{\EfunImpl{\Ef}{\x}{\x}}", from=1-2, to=2-2]
    \arrow["{\Eid{\app{\F}{\x}}}"', from=1-1, to=2-2]
  \end{tikzcd}
  \quad \quad
  \begin{tikzcd}[column sep = 4em]
    {\mult{\Ehom{\Ec_1}{\y}{\z}}{\Ehom{\Ec_1}{\x}{\y}}} & {\Ehom{\Ec_1}{\x}{\z}} \\
    {\mult{\Ehom{\Ec_2}{\app{\F}{\y}}{\app{\F}{\z}}}{\Ehom{\Ec_2}{\app{\F}{\x}}{\app{\F}{\y}}}} & {\Ehom{\Ec_2}{\app{\F}{\x}}{\app{\F}{\z}}}
    \arrow["{\Ecomp{\x}{\y}{\z}}"', from=2-1, to=2-2]
    \arrow["{\EfunImpl{\Ef}{\x}{\z}}", from=1-2, to=2-2]
    \arrow["{\mult{\EfunImpl{\Ef}{y}{z}}{\EfunImpl{\Ef}{x}{y}}}"', from=1-1, to=2-1]
    \arrow["{\Ecomp{\app{\F}{\x}}{\app{\F}{\y}}{\app{\F}{\z}}}", from=1-1, to=1-2]
  \end{tikzcd}
\]
In addition, we require that $\EFromArr{\app{\F}{\f}} = \EFromArr{\f} \cdot \EfunImpl{\Ef}{\x}{\y}$.
\end{defi}

\begin{defi}
\label{def:nat-trans-enrichment}
Let $\Ef_1$ and $\Ef_2$ be $\V$-enrichments for functors $\F_1, \F_2 : \C_1 \rightarrow \C_2$ from $\Ec_1$ to $\Ec_2$.
A natural transformation $\nt : F_1 \twocell F_2$ is called \conceptDef{$\V$-enriched}{CategoryTheory.EnrichedCats.EnrichmentTransformation}{nat_trans_enrichment} whenever the following diagram commutes.
\[
  \begin{tikzcd}
    &[-5pt] {\mult{\Ehom{\Ec_1}{\x}{\y}}{\munit}} &[10pt] {\mult{\Ehom{\Ec_2}{\app{\F_2}{\x}}{\app{\F_2}{\y}}}{\Ehom{\Ec_2}{\app{\F_1}{\x}}{\app{\F_2}{\x}}}} &[-10pt] \\
    {\Ehom{\Ec_1}{\x}{\y}} &&& {\Ehom{\Ec_2}{\app{\F_1}{\x}}{\app{\F_2}{\y}}} \\
    & {\mult{\munit}{\Ehom{\Ec_1}{\x}{\y}}} & {\mult{\Ehom{\Ec_2}{\app{\F_1}{\y}}{\app{\F_2}{\y}}}{\Ehom{\Ec_2}{\app{\F_1}{\x}}{\app{\F_1}{\y}}}}
    \arrow["{\mrinvunitImpl{}}", from=2-1, to=1-2]
    \arrow["{\mult{\Efun{\Ef_2}{\x}{\y}}{\EFromArr{\app{\nt}{\x}}}}", from=1-2, to=1-3]
    \arrow["{\Ecomp{\app{\F_1}{\x}}{\app{\F_2}{\x}}{\app{\F_2}{\y}}}", from=1-3, to=2-4]
    \arrow["{\mlinvunitImpl{}}"', from=2-1, to=3-2]
    \arrow["{\mult{\EFromArr{\app{\nt}{\y}}}{\Efun{\Ef_1}{\x}{\y}}}"', from=3-2, to=3-3]
    \arrow["{\Ecomp{\app{\F_1}{\x}}{\app{\F_1}{\y}}{\app{\F_2}{\y}}}"', from=3-3, to=2-4]
  \end{tikzcd}
\]
\end{defi}

Note that the condition for $\V$-enriched natural transformations can equivalently be formulated by saying that the following diagram commutes.
\begin{equation}
\label{eq:nat-trans-enrichment}
\begin{tikzcd}[column sep = 4em]
  {\Ehom{\Ec_1}{\x}{\y}} & {\Ehom{\Ec_2}{\app{\F_1}{\x}}{\app{\F_1}{\y}}} \\
  {\Ehom{\Ec_2}{\app{\F_2}{\x}}{\app{\F_2}{\y}}} & {\Ehom{\Ec_2}{\app{\F_1}{\x}}{\app{\F_2}{\y}}}
  \arrow["{\Efun{\Ef_1}{\x}{\y}}", from=1-1, to=1-2]
  \arrow["{\Efun{\Ef_2}{\x}{\y}}"', from=1-1, to=2-1]
  \arrow["{\EPrecomp{(\app{\nt}{\x})}}"', from=2-1, to=2-2]
  \arrow["{\EPostcomp{(\app{\nt}{\y})}}", from=1-2, to=2-2]
\end{tikzcd}
\end{equation}
Now we have everything in place to define the bicategory of enriched categories.

\begin{defi}
\label{def:disp-bicat-enrichment} 
Let $\V$ be a monoidal category.
We define the \conceptDef{displayed bicategory $\dEnrichCat{\V}$ of enrichments}{Bicategories.DisplayedBicats.Examples.EnrichedCats}{disp_bicat_of_enriched_cats} over $\UnivCat$ as follows.
\begin{itemize}
  \item The displayed objects over a category $\C$ are $\V$-enrichments for $\C$;
  \item the displayed 1-cells over a functor $\F : \C_1 \onecell \C_2$ from $\Ec_1$ to $\Ec_2$ are $\V$-enrichments for $\F$ from $\C_1$ to $\C_2$;
  \item the displayed 2-cells over a natural transformation $\nt : \F_1 \twocell \F_2$ from $\Ef_1$ to $\Ef_2$ are proofs that $\nt$ is $\V$-enriched.
\end{itemize}
The \conceptDef{bicategory of enriched categories}{Bicategories.DisplayedBicats.Examples.EnrichedCats}{bicat_of_enriched_cats} is defined to be the total bicategory of $\dEnrichCat{\V}$,
and we denote it by $\EnrichCat{\V}$.
Its objects are \textbf{univalent $\V$-enriched categories}, and we call the 1-cells and 2-cells of $\EnrichCat{\V}$ \textbf{enriched functors} and \textbf{enriched transformations} respectively.
\end{defi}

Note that by construction a univalent $\V$-enriched category is the same as a univalent category together with a $\V$-enrichment.
Objects in univalent $\V$-enriched categories are thus identified up to isomorphism in the underlying category.
In addition, our univalence condition for $\V$-enriched categories has no local variant in contrast to univalence for bicategories \cite{bicatspaper},
since we only look at enrichments over monoidal 1-categories.

To show that \cref{def:disp-bicat-enrichment} actually gives rise to a displayed bicategory,
one also needs to construct enrichments for the identity and composition,
and one needs to prove that the identity transformation is enriched
and that enriched transformations are preserved under composition and whiskering.
Details on this construction are left to the formalization.

\begin{propL}[\coqdocurl{Bicategories.DisplayedBicats.Examples.EnrichedCats}{make_is_invertible_2cell_enriched}]
\label{prop:inv2cell-enriched}
Let $\nt : \Ef_1 \twocell \Ef_2$ be a 2-cell in $\EnrichCat{\V}$.
Then $\nt$ is invertible if the underlying natural transformation of $\nt$ is a natural isomorphism.
\end{propL}

\begin{proof}
Suppose that we have enriched functors $\Ef_1, \Ef_2 : \Ec_1 \onecell \Ec_2$
and an enriched natural transformation $\nt : \Ef_1 \twocell \Ef_2$
whose underlying natural transformation is a pointwise isomorphism.
We need to construct another enriched natural transformation $\nt^{-1} : \Ef_2 \twocell \Ef_1$
such that $\nt$ and $\nt^{-1}$ compose to the identity.
Given $x : \Ec_1$,
we define $\nt^{-1}(x)$ to be the inverse of $\nt(x)$.
To prove naturality,
it suffices to show that the following diagram commutes.
\[
\begin{tikzcd}[column sep = 4em]
  {\Ehom{\Ec_1}{\x}{\y}} & {\Ehom{\Ec_2}{\app{\F_2}{\x}}{\app{\F_2}{\y}}} \\
  {\Ehom{\Ec_2}{\app{\F_1}{\x}}{\app{\F_1}{\y}}} & {\Ehom{\Ec_2}{\app{\F_2}{\x}}{\app{\F_1}{\y}}}
  \arrow["{\Efun{\Ef_2}{\x}{\y}}", from=1-1, to=1-2]
  \arrow["{\Efun{\Ef_1}{\x}{\y}}"', from=1-1, to=2-1]
  \arrow["{\EPrecomp{(\app{\nt^{-1}}{\x})}}"', from=2-1, to=2-2]
  \arrow["{\EPostcomp{(\app{\nt^{-1}}{\y})}}", from=1-2, to=2-2]
\end{tikzcd}
\]
This holds because $\nt$ is an enriched transformation,
and because $\EPrecomp{(\app{\nt^{-1}}{\x})} = (\EPrecomp{(\app{\nt}{\x})})^{-1}$.
\end{proof}

\begin{thrm}[\coqdocurl{Bicategories.DisplayedBicats.Examples.EnrichedCats}{is_univalent_2_bicat_of_enriched_cats}]
\label{thm:univalence-principle-enriched-cats}
If $\V$ is a univalent monoidal category,
then the bicategory $\EnrichCat{\V}$ is univalent.
\end{thrm}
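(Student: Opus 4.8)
The plan is to use that $\EnrichCat{\V}$ is, by construction, the total bicategory of the displayed bicategory $\dEnrichCat{\V}$ over $\UnivCat$, together with the standard criterion for univalence of total bicategories \cite{bicatspaper}: if the base bicategory is univalent and the displayed bicategory over it is \emph{displayed univalent}---that is, both displayed locally univalent (displayed $(2,1)$-univalence) and displayed globally univalent (displayed $(2,0)$-univalence)---then the total bicategory is univalent. Since $\UnivCat$ is univalent \cite{bicatspaper}, it suffices to prove that $\dEnrichCat{\V}$ is displayed univalent. Two observations simplify this. First, the displayed $2$-cells of $\dEnrichCat{\V}$ are propositions, since being $\V$-enriched is a family of equations between morphisms of $\V$ and hence a property of a natural transformation. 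Second, for any functor $\F$ the type of $\V$-enrichments for $\F$ is a set, as it consists of a family of morphisms of $\V$ together with propositional coherence conditions.

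For displayed local univalence I must show that, for $\V$-enrichments $\Ef_1$ and $\Ef_2$ of the same functor $\F$, the canonical map from $\Ef_1 = \Ef_2$ to the type of displayed invertible $2$-cells over $\id[\F]$ is an equivalence. As displayed $2$-cells are propositions, such an invertible $2$-cell from $\Ef_1$ to $\Ef_2$ is precisely a proof that the identity natural transformation is $\V$-enriched from $\Ef_1$ to $\Ef_2$. Instantiating the reformulated condition~\eqref{eq:nat-trans-enrichment} at $\nt = \id$---where, by the unit axioms of the enrichment, pre- and post-composition with the identity morphism are again the identity---this condition unfolds to $\EfunImpl{\Ef_1}{\x}{\y} = \EfunImpl{\Ef_2}{\x}{\y}$ for all $\x, \y$. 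By function extensionality and the propositionality of the coherence conditions on functor enrichments, this is equivalent to $\Ef_1 = \Ef_2$, and a short computation identifies this equivalence with the canonical map.

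For displayed global univalence I fix a univalent category $\C$ and show that the canonical map from $\Ec_1 = \Ec_2$ to the type of displayed adjoint equivalences over $\id[\C]$ from $\Ec_1$ to $\Ec_2$ is an equivalence. Since $\dEnrichCat{\V}$ is locally univalent and has propositional $2$-cells, such a displayed adjoint equivalence is entirely determined by its underlying displayed $1$-cell $\Ef$ over $\id[\C]$, and the data turning $\Ef$ into such an equivalence exists exactly when every component $\EfunImpl{\Ef}{\x}{\y} : \Ehom{\Ec_1}{\x}{\y} \onecell \Ehom{\Ec_2}{\x}{\y}$ is an isomorphism in $\V$. Because $\V$ is a univalent monoidal category, identities between objects of $\V$ coincide with isomorphisms in $\V$, so such a family of isomorphisms corresponds to a family of identities $\Ehom{\Ec_1}{\x}{\y} = \Ehom{\Ec_2}{\x}{\y}$. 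Under this correspondence the conditions making $\Ef$ compatible with the enriched identities, with composition, and with the morphisms $\EFromArr{-}$ become exactly the equalities---after transport along the hom-object identities---of the remaining enrichment data $\Eid{}$, $\Ecomp{}{}{}$, and $\EFromArr{-}$, while equality of $\EToArr{-}$ is then automatic since it is inverse to $\EFromArr{-}$. Together with the propositionality of the enrichment axioms, this exhibits the type of displayed adjoint equivalences over $\id[\C]$ from $\Ec_1$ to $\Ec_2$ as equivalent to $\Ec_1 = \Ec_2$, and a computation shows this is the canonical map.

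The main obstacle is the displayed global univalence: unfolding ``displayed adjoint equivalence over $\id[\C]$'', checking that its higher coherence data is contractible given local univalence and propositional $2$-cells, cutting it down to a family of isomorphisms of hom-objects, and then transporting every piece of enrichment structure while matching the canonical map, is where the genuine bookkeeping lies. In practice one organises this modularly, presenting $\dEnrichCat{\V}$ as an iterated sigma of simpler displayed bicategories---a base layer carrying only the assignment of hom-objects, with successive sub-displayed-bicategory layers adjoining the enriched identities, composition, and the $\EFromArr{-}$ and $\EToArr{-}$ data together with the propositional axioms---and applying the modularity results for displayed univalence to each layer, so that the whole proof reduces to the univalence of $\V$ and a handful of ``this data or condition is propositional'' checks. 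The details of this decomposition are left to the formalization.
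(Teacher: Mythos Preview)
Your proposal is correct and follows essentially the same approach as the paper: the paper constructs $\EnrichCat{\V}$ as the total bicategory of the displayed bicategory $\dEnrichCat{\V}$ over $\UnivCat$ precisely so that the univalence of $\EnrichCat{\V}$ reduces, via the displayed machinery of \cite{bicatspaper}, to displayed local and global univalence of $\dEnrichCat{\V}$, with the univalence of $\V$ entering exactly where you use it. The paper does not spell out the unfolding of displayed adjoint equivalences and displayed invertible $2$-cells in the text---it defers these details to the formalization---so your write-up is a faithful elaboration of the intended argument rather than a different one.
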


\begin{proof}
It suffices to verify that the displayed bicategory $\dEnrichCat{\V}$ is both locally and globally univalent \cite[Theorem 7.4]{bicatspaper}.
To see that $\dEnrichCat{\V}$ is locally univalent,
we need to show that identities of enrichments $\Ef_1, \Ef_2 : \Ec_1 \onecell \Ec_2$ for a functor $\F$
are the same as invertible 2-cells from $\Ef_1$ to $\Ef_2$ over the identity natural transformation.
We first note that we have $\Ef_1 = \Ef_2$ if and only if we have $\EfunImpl{\Ef_1}{\x}{\y} = \EfunImpl{\Ef_2}{\x}{\y}$ for all $\x, \y : \Ec_1$,
and that invertible 2-cells over the identity are the same as enrichments for the identity by \Cref{prop:inv2cell-enriched}.
Using \cref{eq:nat-trans-enrichment}, we see that the desired enrichment exists if and only if $\EfunImpl{\Ef_1}{\x}{\y} = \EfunImpl{\Ef_2}{\x}{\y}$ for all $\x, \y : \Ec_1$,
which concludes the proof that $\dEnrichCat{\V}$ is locally univalent.

Next we verify that $\dEnrichCat{\V}$ is globally univalent,
meaning that we need to show that identities between enrichments $\Ec_1$ and $\Ec_2$ over a category $\C$
are equivalent to adjoint equivalences between $\Ec_1$ and $\Ec_2$ over the identity.
Since $\V$ is univalent, identities between $\Ec_1$ and $\Ec_2$ correspond to isomorphisms
$i_{\x, \y} : \Ehom{\Ec_1}{\x}{\y} \cong \Ehom{\Ec_2}{\x}{\y}$ for all $\x, \y : \C$
making the following diagrams commute.
\[
\begin{tikzcd}
  & {\munit[V]} \\
  {\Ehom{\Ec_1}{\x}{\x}} && {\Ehom{\Ec_2}{\x}{\x}}
  \arrow["{\EidImpl[\Ec_1]{\x}}"', from=1-2, to=2-1]
  \arrow["{\EidImpl[\Ec_2]{\x}}", from=1-2, to=2-3]
  \arrow["{i_{\x, \x}}"', from=2-1, to=2-3]
\end{tikzcd}
\quad \quad
\begin{tikzcd}
  & {\munit[V]} \\
  {\Ehom{\Ec_1}{\x}{\y}} && {\Ehom{\Ec_2}{\x}{\y}}
  \arrow["{\EFromArr{\f}}"', from=1-2, to=2-1]
  \arrow["{\EFromArr{\f}}", from=1-2, to=2-3]
  \arrow["{i_{\x, \y}}"', from=2-1, to=2-3]
\end{tikzcd}
\]
\[
\begin{tikzcd}[column sep = 4em]
  {\mult{\Ehom{\Ec_1}{\y}{\z}}{\Ehom{\Ec_1}{\x}{\y}}} & {\mult{\Ehom{\Ec_2}{\y}{\z}}{\Ehom{\Ec_2}{\x}{\y}}} \\
  {\Ehom{\Ec_1}{\x}{\z}} & {\Ehom{\Ec_2}{\x}{\z}}
  \arrow["{\mult{i_{\y, \z}}{i_{\x, \y}}}", from=1-1, to=1-2]
  \arrow["{\EcompImpl[\Ec_1]{\x}{\y}{\z}}"', from=1-1, to=2-1]
  \arrow["{\EcompImpl[\Ec_2]{\x}{\y}{\z}}", from=1-2, to=2-2]
  \arrow["{i_{\x, \z}}"', from=2-1, to=2-2]
\end{tikzcd}
\]
In addition,
for every morphism $\f : \munit[V] \onecell \Ehom{\Ec_1}{\x}{\y}$
we have that $\EToArr{\f} = \EToArr{\f \cdot i_{\x , \y}}$.
The inverse $i_{\x, \y}^{-1}$ satisfies similar equalities.
Identities $\Ec_1 = \Ec_2$
give rise to enriched functors $\Ef_1 : \Ec_1 \onecell \Ec_2$ and $\Ef_2 : \Ec_2 \onecell \Ec_1$
such that $\EfunImpl{\Ef_1}{\x}{\y} = i_{\x, \y}$ and $\EfunImpl{\Ef_2}{\x}{\y} = i_{\x, \y}^{-1}$.
This allows us to conclude that $\dEnrichCat{\V}$ is globally univalent.
\end{proof}

\section{Examples of Enriched Categories}
\label{sec:examples}
Before we continue our study of univalent enriched categories,
we first look at numerous examples of enrichments that we use in the remainder of this paper.
In \cref{sec:structure-enrichment}, we characterize enrichments over a large class of structures.

\subsection{General Examples}
\label{sec:general-examples}

\begin{exa}
\label{exa:self-enrichment}
Let $\V$ be a symmetric monoidal closed category.
We define a $\V$-enrichment for $\V$, which we call the \conceptDef{self-enrichment}{CategoryTheory.EnrichedCats.Examples.SelfEnriched}{self_enrichment} and denote by $\self{\V}$, as follows.
\begin{itemize}
  \item We define $\Ehom{\self{\V}}{\x}{\y}$ to be $\mhom{\x}{\y}$.
  \item The enriched identity $\EidImpl{\x} : \munit \onecell \mhom{\x}{\x}$ is defined to be $\mlam{\mlunitImpl{\x}}$.
  \item The composition $\EcompImpl{\x}{\y}{\z} : \mult{(\mhom{\y}{\z})}{(\mhom{\x}{\y})} \onecell \mhom{\x}{\z}$ is the exponential transpose of the following composition of morphisms.
    \[\begin{tikzcd}
	{\mult{(\mult{(\mhom{\y}{\z})}{(\mhom{\x}{\y})})}{\x}} & {\mult{(\mhom{\y}{\z})}{(\mult{(\mhom{\x}{\y})}{\x})}} & {\mult{(\mhom{\y}{\z})}{\y}} & \z
	\arrow["{\massoc{}{}{}}", from=1-1, to=1-2]
	\arrow["{\mult{\id}{\meval{\x}{\y}}}", from=1-2, to=1-3]
	\arrow["{\meval{\y}{\z}}", from=1-3, to=1-4]
      \end{tikzcd}\]
  \item Given $\f : \x \onecell \y$, we define $\EFromArr{\f} : \munit \onecell \mhom{\x}{\y}$ to be $\mlam{\mlunitImpl{\x} \cdot f}$.
  \item For $\f : \munit \onecell \mhom{\x}{\y}$, we define $\EToArr{\f}$ to be the following composition of morphisms.
    \[\begin{tikzcd}
	\x & {\mult{\munit}{\x}} & {\mult{(\mhom{\x}{\y})}{\x}} & \y
	\arrow["{\mlinvunitImpl{x}}", from=1-1, to=1-2]
	\arrow["{\mult{\f}{\id[x]}}", from=1-2, to=1-3]
	\arrow["{\meval{\x}{\y}}", from=1-3, to=1-4]
      \end{tikzcd}\]
\end{itemize}
If we assume that $\V$ is univalent,
then $\self{\V}$ is a univalent enriched category.
\end{exa}

\begin{exa}
\label{exa:full-sub-enrichment}
Let $\C$ be a category together with a $\V$-enrichment $\Ec$,
and let $\Pred$ be a predicate on the objects of $\C$.
From all of this, we obtain a \conceptDef{$\V$-enrichment $\FSubE{\Pred}$ for the full subcategory}{CategoryTheory.EnrichedCats.Examples.FullSubEnriched}{fullsub_enrichment} $\FSub{\Pred}$,
such that $\Ehom{\FSubE{\Pred}}{\x}{\y} \defeq \Ehom{\Ec}{\x}{\y}$.
If $\C$ is univalent,
then the full subcategory of $\C$ is also univalent,
and in that case, this construction gives rise to a univalent enriched category.
\end{exa}

\begin{exa}
\label{exa:opposite-cat-enrichment}
Suppose that $\V$ is a symmetric monoidal category,
and let $\C$ be a category together with a $\V$-enrichment $\Ec$.
We define the $\V$-enrichment $\OpE{\Ec}$, called the \conceptDef{opposite enrichment}{CategoryTheory.EnrichedCats.Examples.OppositeEnriched}{op_enrichment}, for $\Op{\C}$ as follows.
\begin{itemize}
  \item $\Ehom{\OpE{\Ec}}{\x}{\y} \defeq \Ehom{\Ec}{\y}{\x}$;
  \item $\EidImpl[\OpE{\Ec}]{\x} \defeq \EidImpl[\Ec]{\x}$;
  \item $\EcompImpl[\OpE{\Ec}]{\x}{\y}{\z}
    \defeq
    \begin{tikzcd}
	{\mult{\Ehom{\Ec}{\z}{\y}}{\Ehom{\Ec}{\y}{\x}}} & {\mult{\Ehom{\Ec}{\y}{\x}}{\Ehom{\Ec}{\z}{\y}}} & {\Ehom{\Ec}{\z}{\x}}
	\arrow["\msym", from=1-1, to=1-2]
	\arrow["{\Ecomp[\Ec]{\z}{\y}{\x}}", from=1-2, to=1-3]
      \end{tikzcd}$
\end{itemize}
The operations $\EFromArr{\f}$ and $\EToArr{\f}$ in $\OpE{\Ec}$ are inherited from $\Ec$.
In addition, $\OpE{\Ec}$ gives rise to a univalent enriched category if $\C$ is univalent.
\end{exa}

In fact, using \cref{exa:opposite-cat-enrichment} one can construct a duality involution on $\EnrichCat{\V}$.

\begin{exa}
\label{exa:dialgebras-enrichment}
Suppose that $\V$ is a symmetric monoidal category that has equalizers,
and suppose that we have two enriched functors $\Ef_1, \Ef_2 : \Ec_1 \onecell \Ec_2$.
We have the category $\Dialg{\Ef_1}{\Ef_2}$ of dialgebras whose objects are pairs $(\x, \f)$ consisting of an object $\x : \Ec_1$
together with a morphism $\f : \app{\Ef_1}{\x} \onecell \app{\Ef_2}{\x}$.
Morphisms from $(\x, \f)$ to $(\y, \g)$ are morphisms $\h : \x \onecell \y$ such that the following diagram commutes.
\[
  \begin{tikzcd}
    {\app{\Ef_1}{\x}} & {\app{\Ef_1}{\y}} \\
    {\app{\Ef_2}{\x}} & {\app{\Ef_2}{\y}}
    \arrow["\f"', from=1-1, to=2-1]
    \arrow["\g", from=1-2, to=2-2]
    \arrow["{\app{\Ef_1}{\h}}", from=1-1, to=1-2]
    \arrow["{\app{\Ef_2}{\h}}"', from=2-1, to=2-2]
  \end{tikzcd}
\]
We define a \conceptDef{$\V$-enrichment $\DialgE{\Ef_1}{\Ef_2}$ for $\Dialg{\Ef_1}{\Ef_2}$}{CategoryTheory.EnrichedCats.Examples.DialgebraEnriched}{dialgebra_enrichment}.
Suppose that we have objects $(\x, \f)$ and $(\y, \g)$ in $\Dialg{\Ef_1}{\Ef_2}$.
We define the object $\Ehom{\DialgE{\Ef_1}{\Ef_2}}{(\x, \f)}{(\y, \g)}$
as the equalizer of the following diagram.
\[\begin{tikzcd}
	&& {\Ehom{\Ec_2}{\app{\Ef_1}{\x}}{\app{\Ef_1}{\y}}} \\
	{\Ehom{\DialgE{\Ef_1}{\Ef_2}}{(\x, \f)}{(\y, \g)}} & {\Ehom{\Ec_1}{\x}{\y}} && {\Ehom{\Ec_2}{\app{\Ef_1}{\x}}{\app{\Ef_2}{\y}}} \\
	&& {\Ehom{\Ec_2}{\app{\Ef_2}{\x}}{\app{\Ef_2}{\y}}}
	\arrow["{\Efun{\Ef_1}{\x}{\y}}", from=2-2, to=1-3]
	\arrow["{\Efun{\Ef_2}{\x}{\y}}"', from=2-2, to=3-3]
	\arrow["{\EPostcomp{g}}", from=1-3, to=2-4]
	\arrow["{\EPrecomp{f}}"', from=3-3, to=2-4]
	\arrow[hook, from=2-1, to=2-2]
\end{tikzcd}\]
To define the enriched identity and composition morphisms, one uses the universal property of equalizers.
If $\C$ is univalent, then so is the category of dialgebras,
and in that case, $\DialgE{\Ef_1}{\Ef_2}$ is a univalent enriched category.
\end{exa}

From \cref{exa:dialgebras-enrichment}, we get inserters in the bicategory $\EnrichCat{\V}$.

\begin{exa}
\label{exa:enriched-functor-cat}
Let $\V$ be a complete symmetric monoidal category, and suppose that we have enriched categories $\Ec_1$ and $\Ec_2$.
Note that we have a category $\EFunctor{\Ec_1}{\Ec_2}$ whose objects are given by enriched functors from $\Ec_1$ to $\Ec_2$,
and whose morphisms are given by enriched natural transformations.
We define a \conceptDef{$\V$-enrichment $\EFunctorE{\Ec_1}{\Ec_2}$ for $\EFunctor{\Ec_1}{\Ec_2}$}{CategoryTheory.EnrichedCats.Examples.FunctorCategory}{enriched_functor_category_enrichment}
as the equalizer of the morphisms displayed below.
\[
  \begin{tikzcd}
    {\prod_{\x : \Ec_1}\Ehom{\Ec_2}{\app{\Ef_1}{\x}}{\app{\Ef_2}{\x}}} & {\prod_{\x, \y : \Ec_1} \mhom{\Ehom{\Ec_1}{x}{y}}{\Ehom{\Ec_2}{\app{\Ef_1}{\x}}{\app{\Ef_2}{\y}}}}
    \arrow["\g"', shift right=2, from=1-1, to=1-2]
    \arrow["\f", shift left=2, from=1-1, to=1-2]
  \end{tikzcd}
\]
Here $\f$ is defined to be the following composition of morphisms
\[
  \begin{tikzcd}
    {\prod_{\x : \Ec_1}\Ehom{\Ec_2}{\app{\Ef_1}{x}}{\app{\Ef_2}{x}}} & {\Ehom{\Ec_2}{\app{\Ef_1}{\y}}{\app{\Ef_2}{\y}}} & {\mhom{\Ehom{\Ec_1}{x}{y}}{\Ehom{\Ec_2}{\app{\Ef_1}{x}}{\app{\Ef_2}{y}}}}
    \arrow["{\pi_\y}", from=1-1, to=1-2]
    \arrow["\varphi", from=1-2, to=1-3]
  \end{tikzcd}
\]
where $\varphi$ is the exponential transpose of
\[
  \begin{tikzcd}
    {\mult{\Ehom{\Ec_2}{\app{\Ef_1}{\y}}{\app{\Ef_2}{\y}}}{\Ehom{\Ec_1}{\x}{\y}}} &[5pt] {\mult{\Ehom{\Ec_2}{\app{\Ef_1}{\y}}{\app{\Ef_2}{\y}}}{\Ehom{\Ec_2}{\app{\Ef_1}{\x}}{\app{\Ef_1}{\y}}}} &[-2pt] {\Ehom{\Ec_2}{\app{\Ef_1}{\x}}{\app{\Ef_2}{\y}}.}
    \arrow["\mult{\id{}}{\Efun{\Ef_1}{\x}{\y}}", from=1-1, to=1-2]
    \arrow["\Ecomp{}{}{}", from=1-2, to=1-3]
  \end{tikzcd}
\]
We define $g$ analogously.
The fact that $\EFunctor{\Ec_1}{\Ec_2}$ is univalent,
follows from the fact that $\EnrichCat{\V}$ is univalent (\cref{thm:univalence-principle-enriched-cats}).
\end{exa}

Inspired by \cref{exa:enriched-functor-cat}, we can refine \cref{exa:self-enrichment}.
More specifically, given a small category $\C$ and suppose that $\V$ is complete,
we can define an enrichment over the functor category from $\C$ to $\V$.
The construction is analogous to \cref{exa:enriched-functor-cat}, and details can be found in the formalization.

Finally, we look at the change of base operation for enriched categories,
and for this operation, a subtlety arises.
Given a lax monoidal functor $\F : \V_1 \onecell \V_2$,
our goal is to define a pseudofunctor $\EnrichCat{\V_1} \onecell \EnrichCat{\V_2}$.
On objects, this operation acts as follows:
given a univalent category $\C$ together with an enrichment $\Ec$,
then we get an enriched category $\Change{\F}{\Ec}$ whose objects are objects in $\C$
and such that $\Ehom{\Change{\F}{\Ec}}{\x}{\y} = \app{\F}{(\Ehom{\Ec}{\x}{\y})}$.
However, the underlying category of this enriched category is \textbf{not} necessarily univalent.
For instance, if we take $\F$ to be the unique monoidal functor from $\Set$ to the terminal category,
then the underlying category of $\Change{\F}{\Ec}$ would have sets as objects,
and inhabitants of the unit type as the morphisms.
For this reason,
we add a restriction to $\F$ in order to define the change of base of enriched categories.

\begin{defi}
\label{def:preserves-underlying}
Let $\F : \V_1 \onecell \V_2$ be a lax monoidal functor.
We say that $\F$ \conceptDef{preserves underlying categories}{CategoryTheory.EnrichedCats.Examples.ChangeOfBase}{preserve_underlying}
if for all $\x : \V_1$ the function that sends morphisms $\f : \munit[\V_1] \onecell x$
to
$\begin{tikzcd}
  {\munit[\V_2]} & {\app{\F}{\munit[\V_1]}} & {\app{\F}{\x}}
  \arrow["{\mfununit{\F}}", from=1-1, to=1-2]
  \arrow["{\app{\F}{\f}}", from=1-2, to=1-3]
\end{tikzcd}$
is an equivalence of types.
If we have $\f : \munit[\V_2] \onecell \app{\F}{\x}$,
then we denote the action of the inverse by $\mfunchange{\F}{\f}$.
\end{defi}

Every strong monoidal functor that is fully faithful also preserves underlying categories,
because in that case $\mfununit{\F}$ is an isomorphism
and the map sending $\f$ to $\app{\F}{\f}$ is an equivalence of types.
The requirement in \cref{def:preserves-underlying} says that the underlying category is preserved by change of base along $\F$.
With this additional assumption, we define the change of base of enriched categories.

\begin{exa}
\label{exa:change-of-base}
Let $\F : \V_1 \onecell \V_2$ be a lax monoidal functor that preserves underlying categories,
and let $\C$ be a category together with a $\V$-enrichment $\Ec$.
We define the \conceptDef{change-of-base enrichment}{CategoryTheory.EnrichedCats.Examples.ChangeOfBase}{change_of_base_enrichment} $\Change{\F}{\Ec}$ for $\C$ as follows.
The hom-object $\Ehom{\Change{\F}{\Ec}}{\x}{\y}$ is defined to be $\app{\F}{(\Ehom{\Ec}{\x}{\y})}$,
and the enriched identity $\EidImpl{\x}$ is defined as the composition.
\[
  \begin{tikzcd}
    {\munit[\V_2]} &[-2pt] {\app{\F}{\munit[\V_1]}} &[7pt] {\app{\F}{(\Ehom{\Ec}{\x}{\x})}}
    \arrow["{\mfununit{\F}}", from=1-1, to=1-2]
    \arrow["{\app{\F}{(\EidImpl{\x})}}", from=1-2, to=1-3]
  \end{tikzcd}
\]
Composition is defined similarly.
\[
  \begin{tikzcd}
    {\mult{\app{\F}{(\Ehom{\Ec}{\y}{\z})}}{\app{\F}{(\Ehom{\Ec}{\x}{\y})}}} &[-2pt] {\app{\F}{(\mult{\Ehom{\Ec}{\y}{\z}}{\Ehom{\Ec}{\x}{\y}})}} &[7pt] {\app{\F}{(\Ehom{\Ec}{\x}{\y})}}
    \arrow["{\app{\F}{\Ecomp{\x}{\y}{\z}}}", from=1-2, to=1-3]
    \arrow["{\mfunmult{\F}}", from=1-1, to=1-2]
  \end{tikzcd}
\]
If we have a morphism $\f : \x \onecell \y$, then we define $\EFromArr{\f}$ to be
\[
  \begin{tikzcd}
    {\munit[\V_2]} &[-2pt] {\app{\F}{\munit[\V_1]}} &[7pt] {\app{\F}{(\Ehom{\Ec}{\x}{\x})}.}
    \arrow["{\mfununit{\F}}", from=1-1, to=1-2]
    \arrow["{\app{\F}{(\EFromArr{\f})}}", from=1-2, to=1-3]
  \end{tikzcd}
\]
Finally, for a morphism $\f : \munit[\V_2] \onecell \app{\F}{(\Ehom{\Ec}{\x}{\x})}$,
we define $\EToArr{\f}$ to be $\EToArr{\mfunchange{\F}{\f}}$.
Note that here we use that $\F$ preserves underlying categories.
In addition, if we assume that $\C$ is univalent,
then we get a univalent enriched category $\Change{\F}{\Ec}$.
\end{exa}

Finally,
we discuss two classical examples of enriched categories,
namely the unit and the tensor.
Instead of using enrichments, we construct them in the usual way,
from which we still get a category with an enrichment.

\begin{exa}
\label{exa:unit-enriched}
Given a monoidal category $\V$,
we construct the \conceptDef{unit enriched category $\UnitE$}{CategoryTheory.EnrichedCats.Examples.UnitEnriched}{unit_enriched_cat_with_enrichment} as follows.

\begin{itemize}
  \item The objects of $\UnitE$ are inhabitants of the unit type.
  \item The enriched hom object from $\unittel$ to $\unittel$ is $\munit[\V]$ where $\unittel$ is the inhabitant of the unit type.
  \item The enriched identity is given by $\id : \munit[\V] \onecell \munit[\V]$.
  \item The enriched composition is given by $\mlunit{} : \munit[\V] \otimes \munit[\V] \onecell \munit[\V]$.
\end{itemize}
\end{exa}

Note that $\UnitE$ is not necessarily univalent.
For instance,
if $\V$ is the symmetric monoidal category of abelian groups with the tensor product,
then the morphisms in $\UnitE$ from $\unittel$ to $\unittel$ are the same as integers and composition is given by addition.
As a consequence, there is an isomorphism from $\unittel$ to $\unittel$ for each integer,
while for $\UnitE$ to be univalent, there must be a unique isomorphism.

\begin{exa}
\label{exa:tensor-enriched}
Let $\V$ be a symmetric monoidal category, and suppose that we have enriched categories $\Ec_1$ and $\Ec_2$ over $\V$.
We construct their \conceptDef{tensor product $\mult{\Ec_1}{\Ec_2}$}{CategoryTheory.EnrichedCats.Examples.Tensor}{tensor_cat_with_enrichment} as follows.

\begin{itemize}
  \item The objects of $\mult{\Ec_1}{\Ec_2}$
    are pairs $(\x , \y)$ of objects $\x : \Ec_1$ and $\y : \Ec_2$.
  \item The enriched hom object from $(\x_1 , \y_1)$ to $(\x_2 , \y_2)$ is $\mult{\Ehom{\Ec_1}{\x_1}{\x_2}}{\Ehom{\Ec_2}{\y_1}{\y_2}}$.
  \item The enriched identity is given by the following composition of morphisms.
    \[
      \begin{tikzcd}[column sep = 4.5em]
	{\munit[\V]} & {\munit[\V] \otimes \munit[\V]} & {\mult{\Ehom{\Ec_1}{\x}{\x}}{\Ehom{\Ec_2}{\y}{\y}}}
	\arrow["{\mlinvunit{}}", from=1-1, to=1-2]
	\arrow["{\mult{\EidImpl{\x}}{\EidImpl{\y}}}", from=1-2, to=1-3]
      \end{tikzcd}
    \]
  \item The enriched composition is defined as the following composition of morphisms.
    \[
      \begin{tikzcd}
        {(\mult{\Ehom{\Ec_1}{\x_1}{\x_2}}{\Ehom{\Ec_2}{\y_1}{\y_2}}) \otimes (\mult{\Ehom{\Ec_1}{\x_2}{\x_3}}{\Ehom{\Ec_2}{\y_2}{\y_3}})} \\
        {(\mult{\Ehom{\Ec_1}{\x_1}{\x_2}}{\Ehom{\Ec_1}{\x_2}{\x_3}}) \otimes (\mult{\Ehom{\Ec_2}{\y_1}{\y_2}}{\Ehom{\Ec_2}{\y_2}{\y_3}})} \\
        {\mult{\Ehom{\Ec_1}{\x_1}{\x_3}}{\Ehom{\Ec_2}{\y_1}{\y_3}}}
        \arrow["i", from=1-1, to=2-1]
        \arrow["{\mult{\Ecomp{}{}{}}{\Ecomp{}{}{}}}", from=2-1, to=3-1]
      \end{tikzcd}
    \]
    Here the morphism $i$ is defined using that $\V$ is symmetric monoidal.
\end{itemize}
\end{exa}

Let us reflect upon the underlying category of $\mult{\Ec_1}{\Ec_2}$.
Morphisms from $(\x_1 , \y_1)$ to $(\x_2 , \y_2)$ in the underlying category
are given by morphisms $\munit[\V] \onecell \mult{\Ehom{\Ec_1}{\x_1}{\x_2}}{\Ehom{\Ec_2}{\y_1}{\y_2}}$.
However, this is not necessarily the same as pairs of morphisms $\x_1 \onecell \x_2$ and $\y_1 \onecell \y_2$.
For instance, if $\V$ is the symmetric monoidal category of abelian groups with the tensor product,
then morphisms in $\mult{\Ec_1}{\Ec_2}$ are maps $\mathbb{Z} \onecell \mult{\Ehom{\Ec_1}{\x_1}{\x_2}}{\Ehom{\Ec_2}{\y_1}{\y_2}}$.
Such maps are uniquely determined by an element of $\mult{\Ehom{\Ec_1}{\x_1}{\x_2}}{\Ehom{\Ec_2}{\y_1}{\y_2}}$,
which are not the same as pairs of some $\f : \Ehom{\Ec_1}{\x_1}{\x_2}$ and $\g : \Ehom{\Ec_2}{\y_1}{\y_2}$.
For this reason, the tensor product $\mult{\Ec_1}{\Ec_2}$ is not necessarily univalent,
even if we assume $\Ec_1$ and $\Ec_2$ to be so.

\subsection{Enrichments over Structures}
\label{sec:structure-enrichment}
Next we characterize two classes of enrichments.
First, we characterize enrichments for the category $\Set$ of sets equipped with its Cartesian monoidal structure.

\begin{propL}[\coqdocurl{CategoryTheory.EnrichedCats.Examples.SetEnriched}{iscontr_set_enrichment}]
\label{prop:set-enrichment}
Let $\C$ be a category.
The type of $\Set$-enrichments for $\C$ is contractible.
\end{propL}

From \cref{prop:set-enrichment}, we can conclude that the type of categories is equivalent to the type of $\Set$-enriched categories.
Second, we characterize enrichments for structured sets with a Cartesian monoidal structure.
To do so, we first define a general notion of structured sets.

\begin{defi}
\label{def:structure}
A \conceptDef{Cartesian notion of structure}{CategoryTheory.DisplayedCats.Structures.CartesianStructure}{hset_cartesian_struct} $\Struct$ consists of
\begin{itemize}
  \item a set $\app{\StructOb{\Struct}}{\X}$ of structures on $\X$ for every set $\X$;
  \item a proposition $\StructMor{\StructOnOb{\X}}{\StructOnOb{\Y}}{\f}$ which represents that $f$ is a structure preserving map from $\StructOnOb{\X}$ to $\StructOnOb{\Y}$, for all functions $\f : \X \rightarrow \X$ and structures $\StructOnOb{\X} : \app{\StructOb{\Struct}}{\X}$ and $\StructOnOb{\Y} : \app{\StructOb{\Struct}}{\Y}$;
  \item an inhabitant $\StructUnit : \app{\StructOb{\Struct}}{\unitt}$;
  \item a structure $\StructProd{\StructOnOb{\X}}{\StructOnOb{\Y}} : \app{\StructOb{\Struct}}{(\X \times \Y)}$ for all $\StructOnOb{\X} : \app{\StructOb{\Struct}}{\X}$ and $\StructOnOb{\Y} : \app{\StructOb{\Struct}}{\Y}$.
\end{itemize}
This data is required to satisfy the following axioms.
\begin{itemize}
  \item For every set $\X$ and structure $\StructOnOb{\X} : \app{\StructOb{\Struct}}{\X}$, we have $\StructMor{\StructOnOb{\X}}{\StructOnOb{\X}}{\id[\X]}$; 
  \item for all functions $\f : \X \rightarrow \Y$ and $\g : \Y \rightarrow \Z$
    such that $\StructMor{\app{\StructOb{\Struct}}{\X}}{\app{\StructOb{\Struct}}{\Y}}{\f}$
    and $\StructMor{\app{\StructOb{\Struct}}{\Y}}{\app{\StructOb{\Struct}}{\Z}}{\g}$,
    we have $\StructMor{\app{\StructOb{\Struct}}{\X}}{\app{\StructOb{\Struct}}{\Z}}{\g \circ \f}$;
  \item given structures $\StructOnOb{\X}, \StructOnOb{\X}' : \app{\StructOb{\Struct}}{\X}$
    such that $\StructMor{\StructOnOb{\X}}{\StructOnOb{\X'}}{\id[\X]}$ and $\StructMor{\StructOnOb{\X'}}{\StructOnOb{\X}}{\id[\X]}$,
    we have $\StructOnOb{\X} = \StructOnOb{\X}'$;
  \item given a structure $\StructOnOb{\X} : \app{\StructOb{\Struct}}{\X}$ on a set $\X$,
    we have $\StructMor{\StructOnOb{\X}}{\StructUnit}{\lambdatm{(\x : \X)}{\unittel}}$ where $\unittel$ is the unique element of $\unitt$;
  \item given structures $\StructOnOb{\X} : \app{\StructOb{\Struct}}{\X}$ and $\StructOnOb{\Y} : \app{\StructOb{\Struct}}{\Y}$
    on sets $\X$ and $\Y$ respectively,
    we have $\StructMor{\StructProd{\StructOnOb{\X}}{\StructOnOb{\Y}}}{\StructOnOb{\X}}{\pi_1}$
    and $\StructMor{\StructProd{\StructOnOb{\X}}{\StructOnOb{\Y}}}{\StructOnOb{\Y}}{\pi_2}$;
  \item for all functions $\f : \X \rightarrow \Y$ and $\g : \X \rightarrow \Z$
    such that $\StructMor{\app{\StructOb{\Struct}}{\X}}{\app{\StructOb{\Struct}}{\Y}}{\f}$
    and $\StructMor{\app{\StructOb{\Struct}}{\X}}{\app{\StructOb{\Struct}}{\Z}}{\g}$,
    we have $\StructMor{\app{\StructOb{\Struct}}{\X}}{\StructProd{\StructOnOb{\Y}}{\StructOnOb{\Z}}}{\lambdatm{(\x : \X)}{(\app{\f}{\x} , \app{\g}{\x})}}$.
\end{itemize}
A \conceptDef{Cartesian closed notion of structure}{CategoryTheory.DisplayedCats.Structures.StructureLimitsAndColimits}{hset_cartesian_closed_struct} $\Struct$ is given by a Cartesian notion of structure $\Struct$
together with a structure $\StructHom{\StructOnOb{\X}}{\StructOnOb{\Y}} : \app{\StructOb{\Struct}}{(\sigmatype{\f}{\X \rightarrow \Y}{\StructMor{\StructOnOb{\X}}{\StructOnOb{\Y}}{\f}})}$
for all $\StructOnOb{\X} : \app{\StructOb{\Struct}}{\X}$ and $\StructOnOb{\Y} : \app{\StructOb{\Struct}}{\Y}$
such that
\begin{itemize}
  \item the constant map is structure preserving,
    i.e.,
    we have $\StructMor{\StructOnOb{\X}}{\StructOnOb{\Y}}{\lambdatm{\x}{\y}}$ for all $\y : \Y$;
  \item the evaluation is structure preserving,
    i.e.,
    we have $\StructMor{\StructProd{\StructOnOb{\X}}{(\StructHom{\StructOnOb{\X}}{\StructOnOb{\Y}})}}{\StructOnOb{\Y}}{\lambdatm{\z}{\pi_2(\app{\pi_1}{\z}}}$;
  \item lambda abstraction is structure preserving,
    i.e.,
    given $\f : \X \times \Z \rightarrow \Y$ such that $\StructMor{\StructProd{\StructOnOb{\X}}{\StructOnOb{\Z}}}{\StructOnOb{\Y}}{\f}$,
    we also have $\StructMor{\StructOnOb{\Z}}{\StructHom{\StructOnOb{\X}}{\StructOnOb{\Y}}}{\lambda{\z}{\lambdatm{\x}{\f{(\x , \z)}}}}$.
\end{itemize}
Note that for each $\z : \Z$ the map $\lambdatm{\x}{\f{(\x , \z)}}$ is structure preserving,
because the pairing function $\lambdatm{(\x : \X)}{(\app{\f}{\x} , \app{\g}{\x})}$,
the identity,
and constant functions are structure preserving.
\end{defi}

Note that \cref{def:structure} is extension of standard notions of structures defined in \cite[Definition 9.8.1]{hottbook}:
the added data and axioms guarantee that the resulting category has binary products, a terminal object, and exponentials.

\begin{problem}
\label{prob:structure-to-moncat}
Given a Cartesian (closed) notion of structure $\Struct$, to construct a univalent Cartesian (closed) category $\StructCat{\Struct}$.
\end{problem}

\begin{construction}{\coqdocurl{CategoryTheory.Monoidal.Examples.StructuresMonoidal}{monoidal_cat_of_hset_struct}}{prob:structure-to-moncat}
\label{constr:structure-to-moncat}
In \cite[Section 9.8]{hottbook},
it is shown how every standard notion of structure gives rise to a univalent category.
The terminal object is given by $(\unitt, \StructUnit)$,
and the product of $(\X, \StructOnOb{\X})$ and $(\Y, \StructOnOb{\Y})$ is given by $(\X \times \Y, \StructProd{\StructOnOb{\X}}{\StructOnOb{\Y}})$.
Their internal hom is $((\sigmatype{\f}{\X \rightarrow \Y}{\StructMor{\StructOnOb{\X}}{\StructOnOb{\Y}}{\f}}), \StructHom{\StructOnOb{\X}}{\StructOnOb{\Y}})$.
\end{construction}

\begin{propL}[\coqdocurl{CategoryTheory.EnrichedCats.Examples.StructureEnriched}{enrichment_over_struct_weq_struct_enrichment}]
\label{prop:structure-enrichment}
Let $\C$ be a category and let $\Struct$ be a Cartesian notion of structure.
Then the type of $\StructCat{\Struct}$-enrichments for $\C$ is equivalent to
a structure $\HomStruct{\x}{\y} : \app{\StructOb{\Struct}}{(x \onecell y)}$ for all objects $\x, \y : \C$
such that for all $\x, \y, \z : \C$ we have $\StructMor{\StructProd{\HomStruct{\y}{\z}}{\HomStruct{\x}{\y}}}{\HomStruct{\x}{\z}}{\lambdatm{\f}{\app{\pi_2}{\f} \cdot \app{\pi_1}{\f}}}$.
\end{propL}

As such, to give a $\StructCat{\Struct}$-enrichment for $\C$ one needs to endow every hom-set of $\C$
with an $\Struct$-structure
such that the composition operation is a structure preserving map.

\begin{exa}
\label{exa:dcpo-struct}
We have a Cartesian closed notion of structure $\DCPOStruct$ of \conceptDef{directed complete partial orders structures}{CategoryTheory.DisplayedCats.Examples.DCPOStructures}{cartesian_closed_struct_dcpo} (DCPOs)
such that $\app{\StructOb{\DCPOStruct}}{\X}$ is the set of DCPOs on $\X$
and such that $\StructMor{\StructOnOb{\X}}{\StructOnOb{\Y}}{\f}$ expresses that $\f$ is a Scott continuous map.
As such, a $\DCPOStruct$-enriched category is given by a category whose hom-sets are directed complete partial orders,
and whose composition operation is a Scott-continuous map.

We also have a Cartesian notion of structure $\DCPPOStruct$ of \conceptDef{pointed directed complete partial orders structures}{CategoryTheory.DisplayedCats.Examples.PointedDCPOStructures}{cartesian_closed_struct_dcppo}
such that $\app{\StructOb{\DCPPOStruct}}{\X}$ is the set of pointed DCPOs on $\X$
and such that $\StructMor{\StructOnOb{\X}}{\StructOnOb{\Y}}{\f}$ expresses that $\f$ is a Scott continuous map.
Hence, $\DCPPOStruct$-enriched categories are categories whose hom-sets are pointed directed complete partial orders,
and whose composition operation is a Scott-continuous map.
\end{exa}

\subsection{Smash Products}
\label{sec:smash-products}
Finally,
we discuss a notion of structure
that gives rise to a symmetric monoidal closed category
via the smash product.
This notion of structure is tailored to instances where one can construct the smash product using quotients in the category of sets,
such as pointed sets and pointed posets,
and it does not cover examples like pointed DCPOs.
Recall that, given two pointed sets $(\X, *_\X)$ and $(\Y, *_\Y)$,
their smash product $\X \wedge \Y$ is defined to be $(\X \times \Y)/{\sim}$
where the equivalence relation generated by the following clauses.
\[
(*_\X , \y_1) \sim (*_X , \y_2) \quad \quad (\x_1 , *_\Y) \sim (\x_2 , *_\Y)
\]
Before we can define the desired notion of structure,
we first consider a notion of structure that endows every set with a point.

\begin{defi}
\label{def:pointed-structure}
A \conceptDef{pointed notion of structure}{CategoryTheory.DisplayedCats.Structures.StructureLimitsAndColimits}{pointed_hset_struct} is given by
a Cartesian notion $\Struct$ of structure
together with a point $*_{\StructOnOb{\X}} : \X$ for each $\X$ and $\StructOnOb{\X} : \app{\StructOb{\Struct}}{\X}$
such that
\begin{itemize}
  \item the constant map $\lambdatm{x}{*_{\StructOnOb{\Y}}}$ is structure preserving,
    i.e. we have $\StructMor{\StructOnOb{\X}}{\StructOnOb{\Y}}{\lambdatm{x}{*_\Y}}$;
  \item for all functions $\f : \X \rightarrow \Y$
    such that $\StructMor{\StructOnOb{\X}}{\StructOnOb{\Y}}{\f}$,
    we have $\app{\f}{*_{\StructOnOb{\X}}} = *_{\StructOnOb{\Y}}$.
\end{itemize}
\end{defi}

Every pointed notion $\Struct$ of structure gives rise to an equivalence relation $\SmashEqRel$ on $\X \times \Y$
if we have structures $\StructOnOb{\X} : \app{\StructOb{\Struct}}{\X}$ and $\StructOnOb{\Y} : \app{\StructOb{\Struct}}{\Y}$.
More specifically, we say that $(\x_1 , \y_1) \SmashEqRel (\x_2 , \y_2)$ if
either $\x_1 = \x_2$ and $\y_1 = \y_2$
or if we have
\begin{itemize}
  \item either $\x_1 = *_{\StructOnOb{\X}}$ or $\y_1 = *_{\StructOnOb{\Y}}$, and
  \item either $\x_2 = *_{\StructOnOb{\X}}$ or $\y_2 = *_{\StructOnOb{\Y}}$.
\end{itemize}
This gives rise to a set $\SmashStruct{\X}{\Y}{\Struct}$,
which we define to be $(\X \times \Y)/{\SmashEqRel}$.
We write $[( \x , \y )] : \SmashStruct{\X}{\Y}{\Struct}$ for the equivalence class of the pair $(\x , \y)$.

Note that the type $\SmashStruct{\X}{\Y}{\Struct}$ satisfies an elimination principle,
because it is defined as a quotient type.
Given a set $\Z$ and a map $\h : \X \rightarrow \Y \rightarrow \Z$
such that
\begin{equation}
\label{eq:smash-eqs}
\app{\app{\h}{*_{\StructOnOb{\X}}}}{\y_1} = \app{\app{\h}{*_{\StructOnOb{\X}}}}{\y_2}, \quad
\app{\app{\h}{*_{\StructOnOb{\X}}}}{\y} = \app{\app{\h}{\x}}{*_{\StructOnOb{\Y}}}, \quad
\app{\app{\h}{\x_1}}{*_{\StructOnOb{\Y}}} = \app{\app{\h}{\x_2}}{*_{\StructOnOb{\Y}}},
\end{equation}
we have a unique map $\SmashMap{\h} : \SmashStruct{\X}{\Y}{\Struct} \rightarrow \Z$
such that $\app{\SmashMap{\h}}{[( \x , \y )]} = \app{\app{\h}{\x}}{\y}$.

Now we define when a pointed notion of structure supports the smash product.

\begin{defi}
\label{def:smash-structure}
A \conceptDef{notion of structure that supports smash products}{CategoryTheory.DisplayedCats.Structures.StructuresSmashProduct}{hset_struct_with_smash} is given by a pointed notion $\Struct$ of structure together with
\begin{itemize}
  \item a structure $\StructBool{\Struct} : \app{\StructOb{\Struct}}{\bool}$ on the type of booleans;
  \item a structure $\StructSmash{\StructOnOb{\X}}{\StructOnOb{\Y}} : \app{\StructOb{\Struct}}{(\SmashStruct{\X}{\Y}{\Struct})}$
    for all $\StructOnOb{\X} : \app{\StructOb{\Struct}}{\X}$ and $\StructOnOb{\Y} : \app{\StructOb{\Struct}}{\Y}$
\end{itemize}
such that
\begin{itemize}
  \item we have $*_{\StructBool{\Struct}} = \falseB$ and $*_{\StructSmash{\StructOnOb{\X}}{\StructOnOb{\Y}}} = [ (*_{\StructOnOb{\X}} , *_{\StructOnOb{\Y}}) ]$;
  \item we have $\StructMor{\StructBool{\Struct}}{\StructOnOb{\Y}}{\lambdatm{b}{\IfThenElse{b}{\y}{*_{\StructOnOb{\Y}}}}}$
    and $\StructMor{\StructOnOb{\X}}{\StructOnOb{\Y}}{\lambdatm{\x}{\IfThenElse{\app{\f}{\x}}{\app{\g}{\x}}{*_{\StructOnOb{\Y}}}}}$;
  \item we have $\StructMor{\StructOnOb{\X}}{\StructSmash{\StructOnOb{\X}}{\StructOnOb{\Y}}}{\lambdatm{\x}{[ ( \x , \y )]}}$
    and $\StructMor{\StructOnOb{\Y}}{\StructSmash{\StructOnOb{\X}}{\StructOnOb{\Y}}}{\lambdatm{\y}{[ ( \x , \y )]}}$;
  \item we have $\StructMor{\StructProd{\StructOnOb{\X}}{\StructOnOb{\Y}}}{\StructSmash{\StructOnOb{\X}}{\StructOnOb{\Y}}}{\lambdatm{\z}{[ \z ]}}$
    and $\StructMor{\StructSmash{\StructOnOb{\X}}{\StructOnOb{\Y}}}{\StructOnOb{\Z}}{\SmashMap{\h}}$.
\end{itemize}
\end{defi}

To construct the desired monoidal structure,
we also require that our notion of structure is closed.
More specifically,
this means that we require that we have a structure
$\StructHomSmash{\StructOnOb{\X}}{\StructOnOb{\Y}}$
on the hom sets $\StructFunSmash{\StructOnOb{\X}}{\StructOnOb{\Y}}$,
which are defined to be $\sigmatype{\f}{\X \rightarrow \Y}{\StructMor{\StructOnOb{\X}}{\StructOnOb{\Y}}{\f}}$.
We also require that structure preserving maps from $\StructOnOb{\Z}$ to $\StructHomSmash{\StructOnOb{\X}}{\StructOnOb{\Y}}$
are the same as structure preserving from $\StructSmash{\StructOnOb{\Z}}{\StructOnOb{\X}}$ to $\StructOnOb{\Y}$.

Before we precisely define the notion of structure that we use,
we first introduce some notation.
For now we assume that we are given the aforementioned structure $\StructHomSmash{\StructOnOb{\X}}{\StructOnOb{\Y}}$.
The point $*_{\StructHomSmash{\StructOnOb{\X}}{\StructOnOb{\Y}}}$ gives us a structure preserving function from $\StructOnOb{\X}$ to $\StructOnOb{\Y}$.
Given $\f : \Z \rightarrow \StructFunSmash{\StructOnOb{\X}}{\StructOnOb{\Y}}$,
we define the currying map $\app{\SmashCurry}{\f} : \SmashStruct{\Z}{\X}{\Struct} \rightarrow \Y$
to be $\SmashMap{(\lambdatm{\x}{\lambdatm{\y}{\app{\app{\f}{\x}{\y}}}})}$.
Note that $\app{\SmashCurry}{\f}$ only is a function between sets,
and that it is not necessarily structure preserving.
If we have a function $\f : \SmashStruct{\Z}{\X}{\Struct} \rightarrow \Y$
and a point $\z : \Z$,
we define the uncurrying map $\app{\app{\SmashUncurry}{\f}}{\z} : \StructFunSmash{\StructOnOb{\X}}{\StructOnOb{\Y}}$
to be the structure preserving map $\lambdatm{\x}{\app{\f}{[( \z , \x )]}}$.

\begin{defi}
\label{def:smash-structure-closed}
A \conceptDef{notion of closed structure that supports smash products}{CategoryTheory.DisplayedCats.Structures.StructuresSmashProduct}{hset_struct_with_smash_closed}
is given by a pointed notion $\Struct$ of structure together with
a structure $\StructHomSmash{\StructOnOb{\X}}{\StructOnOb{\Y}}$ on the set $\StructFunSmash{\StructOnOb{\X}}{\StructOnOb{\Y}}$
such that the following requirements are satisfied
\begin{itemize}
  \item for all $\x : \X$ we have $\app{*_{\StructHomSmash{\StructOnOb{\X}}{\StructOnOb{\Y}}}}{\x} = *_{\StructOnOb{\Y}}$;
  \item we have
    $\StructMor{\StructSmash{\StructOnOb{\Z}}{\StructOnOb{\X}}}{\StructOnOb{\Y}}{\app{\SmashCurry}{\f}}$
    and $\StructMor{\StructOnOb{\Z}}{\StructHomSmash{\StructOnOb{\X}}{\StructOnOb{\Y}}}{\app{\SmashUncurry}{\f}}$
    for all suitably typed $\f$;
  \item we have
    $\StructMor{\StructHomSmash{\StructOnOb{\Z}}{(\StructHomSmash{\StructOnOb{\X}}{\StructOnOb{\Y}})}}{\StructHomSmash{\StructSmash{\StructOnOb{\Z}}{\StructOnOb{\X}}}{\StructOnOb{\Y}}}{\SmashCurry}$
    and
    $\StructMor{\StructHomSmash{\StructSmash{\StructOnOb{\Z}}{\StructOnOb{\X}}}{\StructOnOb{\Y}}}{\StructHomSmash{\StructOnOb{\Z}}{(\StructHomSmash{\StructOnOb{\X}}{\StructOnOb{\Y}})}}{\SmashUncurry}$.
\end{itemize}
\end{defi}

Let us reflect upon \Cref{def:smash-structure-closed},
and especially the final requirements.
The final two requirements says that we have an isomorphism of \textbf{structures} between
$\StructHomSmash{\StructOnOb{\Z}}{(\StructHomSmash{\StructOnOb{\X}}{\StructOnOb{\Y}})}$
and
$\StructHomSmash{\StructSmash{\StructOnOb{\Z}}{\StructOnOb{\X}}}{\StructOnOb{\Y}}$
rather than a mere isomorphism of sets.
To formulate these two requirements,
we implicitly use that $\app{\SmashCurry}{\f}$ and $\app{\SmashUncurry}{\f}$ are structure preserving.
This is to guarantee that $\SmashCurry$ indeed gives rise to a structure preserving map from
$\StructHomSmash{\StructOnOb{\Z}}{(\StructHomSmash{\StructOnOb{\X}}{\StructOnOb{\Y}})}$
to
$\StructHomSmash{\StructSmash{\StructOnOb{\Z}}{\StructOnOb{\X}}}{\StructOnOb{\Y}}$,
and similarly for $\SmashUncurry$.

\begin{problem}
\label{prob:smash-structure-to-moncat}
Given a notion of closed structure $\Struct$ that supports smash products,
to construct a univalent symmetric monoidal closed category $\StructCatSmash{\Struct}$.
\end{problem}

\begin{construction}{\coqdocurl{CategoryTheory.Monoidal.Examples.SmashProductMonoidal}{smash_product_sym_mon_closed_cat}}{prob:smash-structure-to-moncat}
\label{constr:smash-structure-to-moncat}
The construction is rather technical and is based on a more general construction \cite[Lemma 4.19 and Construction 4.20]{elmendorf2009permutative}.
The main idea is to exploit the fact that we have an isomorphism between
$\StructHomSmash{\StructOnOb{\Z}}{(\StructHomSmash{\StructOnOb{\X}}{\StructOnOb{\Y}})}$
and
$\StructHomSmash{\StructSmash{\StructOnOb{\Z}}{\StructOnOb{\X}}}{\StructOnOb{\Y}}$
in the category $\StructCat{\Struct}$.
\end{construction}

Following \Cref{prop:structure-enrichment}, we can analogously characterize enrichments over the symmetric monoidal closed category $\StructCatSmash{\Struct}$.

\begin{propL}[\coqdocurl{CategoryTheory.EnrichedCats.Examples.SmashStructureEnriched}{enrichment_over_smash_struct_weq_smash_struct_enrichment}]
\label{prop:smash-structure-enrichment}
Let $\C$ be a category and let $\Struct$ be a notion of closed structure that supports smash products.
Then $\StructCat{\Struct}$-enrichments for $\C$ is equivalent to
a structure $\HomStruct{\x}{\y} : \app{\StructOb{\Struct}}{(x \onecell y)}$ for all objects $\x, \y : \C$
such that for all $\x, \y, \z : \C$ we have
\[
\StructMor{\StructProd{\HomStruct{\y}{\z}}{\HomStruct{\x}{\y}}}{\HomStruct{\x}{\z}}{\lambdatm{\f}{\app{\pi_2}{\f} \cdot \app{\pi_1}{\f}}}
\]
and such that for all $\f : \x \rightarrow \y$ we have
\[
\f \cdot *_{\HomStruct{\y}{\z}} = *_{\HomStruct{\x}{\z}}, \quad \quad
*_{\HomStruct{\w}{\x}} \cdot \f = *_{\HomStruct{\w}{\y}}.
\]
\end{propL}

Note that in \Cref{prop:smash-structure-enrichment},
we require composition to be a structure preserving map whose domain is the product
rather than the smash product.
This is fine,
because we can lift composition to an operation on the smash product
due to the final two requirements.
Let us finish this section with two examples of notions of closed structures that support smash products.

\begin{exa}
\label{exa:pointed-set}
\label{exa:pointed-poset}
We define the \conceptDef{notion $\PointedSet$ of structure given by pointed sets}{CategoryTheory.DisplayedCats.Examples.PointedSetStructures}{pointed_struct_pointed_hset_with_smash_closed}.
The type $\app{\StructOb{\PointedSet}}{\X}$ is the collection of points in $\X$,
and $\StructMor{\x}{\y}{\f}$ says that $\f : \X \rightarrow \Y$ preserves the point, i.e. $\app{\f}{\x} = \y$.
One can show that $\PointedSet$ is a notion of closed structure that supports smash products.
Analogously, we define \conceptDef{$\PointedPoset$}{CategoryTheory.DisplayedCats.Examples.PointedPosetStrict}{pointed_struct_pointed_poset_strict_with_smash_closed},
which is given by pointed posets (posets with a least element).
The type $\app{\StructOb{\PointedPoset}}{\X}$ is the set of partial orders for $\X$ that have a least element,
and $\StructMor{\x}{\y}{\f}$ says that $\f : \X \rightarrow \Y$ is strictly monotone.
This also gives rise to a notion of closed structure that supports smash products.
\end{exa}

\begin{rem}
\label{rem:dcpo-smash-prod}
In \cref{exa:dcpo-struct},
we defined a Cartesian notion of structure by \emph{pointed} DCPOs and Scott continuous maps without requiring these maps to be strict.
For pointed DCPOs and strict Scott continuous maps, one can also define such a structure.
However, in applications, one is often interested in a different monoidal structure for pointed DCPOs with strict maps,
namely the one given by the smash product.
While the smash products of pointed DCPOs can be constructed via quotients~\cite[Theorem 2.9.1]{townsend1996preframe},
one must take these quotients in the appropriate category~\cite{sterling:2024}.
Our methods do not extend to pointed DCPOs,
because quotients of sets do not give rise to quotients of (pointed) DCPOs.
\end{rem}

\section{Image Factorization}
\label{sec:image-factorization}
We continue our study of univalent enriched categories by proving that every essentially surjective and fully faithful (enriched) functor is an adjoint equivalence.
Classically, one would use the axiom of choice to prove this fact:
to define the inverse, one needs to pick preimages and those are only guaranteed to be unique up to isomorphism.
One can give a constructive proof of this fact if one assumes that the domain of the functor in question is univalent.

The way we approach this result, is via \emph{orthogonal factorization systems} in bicategories.
More specifically,
we show that the essentially surjective and the fully faithful enriched functors form an orthogonal factorization system \cite[Lemma 4.3.5]{MR4177953}.
From this fact, one directly obtains that every essentially surjective and fully faithful functor is an adjoint equivalence.
The proof is similar to how in orthogonal factorization systems in categories the intersection of the left and right class of maps are precisely the isomorphisms.

We start by defining \emph{orthogonal} maps in bicategories.

\begin{defi}
\label{def:orthogonal-maps}
Let $\B$ be a bicategory and let $\f : \x_1 \onecell \x_2$ and $\g : \y_1 \onecell \y_2$ be 1-cells.
Then we say that $\f$ is \conceptDef{orthogonal}{Bicategories.OrthogonalFactorization.Orthogonality}{orthogonal} to $\g$, written $\orthogonal{\f}{\g}$,
if the following diagram of categories is a weak pullback in the bicategory of categories.
\[
  \begin{tikzcd}
    {\homC{\B}{\x_2}{\y_1}} & {\homC{\B}{\x_1}{\y_1}} \\
    {\homC{\B}{\x_2}{\y_2}} & {\homC{\B}{\x_1}{\y_2}}
    \arrow["{\precomp{\f}}", from=1-1, to=1-2]
    \arrow["{\postcomp{\g}}", from=1-2, to=2-2]
    \arrow["{\postcomp{\g}}"', from=1-1, to=2-1]
    \arrow["{\precomp{\f}}"', from=2-1, to=2-2]
  \end{tikzcd}
\]
where the functors $\precomp{\f}$ and $\postcomp{\g}$ are given by precomposition with $f$ and postcomposition with $g$ respectively.
\end{defi}

Let us reflect on \cref{def:orthogonal-maps}.
Weak pullbacks of categories are given by iso-comma categories.
The objects in the iso-comma category $\IsoComma{\F}{\G}$ of functors $\F : \C_1 \onecell \C_3$ and $\G : \C_2 \onecell \C_3$ are given by triples $(\x, \y, \f)$ of objects $\x : \C_1$ and $\y : \C_2$ together with an isomorphism $\f : \iso{\app{\F}{\x}}{\app{\G}{\y}}$.
Note that we have a functor $\OFunc{\f}{\g} : \homC{\B}{\x_2}{\y_1} \onecell \IsoComma{\precomp{\f}}{\postcomp{\g}}$.
The functor $\OFunc{\f}{\g}$ maps 1-cells $\h : \x_2 \onecell \y_1$ to the triple $(\h \cdot \g, \f \cdot \h, \lassociatorfull{\f}{\h}{\g})$ where $\lassociator{}{}{}$ is the associator of $\B$.
Orthogonality can equivalently be phrased by saying that the functor $\OFunc{\f}{\g}$ is an adjoint equivalence.
Essential surjectivity of $\OFunc{\f}{\g}$ says that every square has a diagonal filler as follows.
\[
  \begin{tikzcd}
    {\x_1} & {\y_1} \\
    {\x_2} & {\y_2}
    \arrow["\g", from=1-2, to=2-2]
    \arrow["\f"', from=1-1, to=2-1]
    \arrow["{\h_1}", from=1-1, to=1-2]
    \arrow["{\h_2}"', from=2-1, to=2-2]
    \arrow["l"{description}, dashed, from=2-1, to=1-2]
  \end{tikzcd}
\]
More concretely, given the diagram above, there is a lift $l : \x_2 \onecell \y_1$ making the two triangles commute up to invertible 2-cell.
Fully faithfulness of $\OFunc{\f}{\g}$ says that whenever we have two lifts $l_1, l_2 : \x_2 \onecell \y_1$ together with 2-cells $\tau_1 : l_1 \cdot \g \twocell l_2 \cdot \g$ and $\tau_2 : f \cdot l_1 \twocell f \cdot l_2$,
we have a unique 2-cell $\zeta : l_1 \twocell l_2$
such that $\zeta \whiskerr g = \tau_1$ and $f \whiskerl \zeta = \tau_2$.

\begin{defi}
\label{def:orthogonal-factorization}
Let $\B$ be a bicategory.
An \conceptDef{orthogonal factorization system}{Bicategories.OrthogonalFactorization.FactorizationSystem}{orthogonal_factorization_system} on $\B$ consists of two classes of maps, which we denote by $\Left$ and $\Right$, such that
\begin{itemize}
  \item $\Left$ and $\Right$ are closed under invertible 2-cells;
  \item for all 1-cells $\f$ and $\g$ such that $\app{\Left}{\f}$ and $\app{\Right}{\g}$, we have $\orthogonal{\f}{\g}$;
  \item for every 1-cell $\f$, we have a factorization $\iso{\f}{l \cdot r}$ such that $\app{\Left}{l}$ and $\app{\Right}{r}$.
\end{itemize}
\end{defi}

Note that if $\B$ is locally univalent,
then all classes of 1-cells are closed under invertible 2-cells.
In this section, we are interested in a particular factorization system on $\EnrichCat{\V}$,
which is given by the fully faithful and the essentially surjective enriched functors.

\begin{defi}
\label{defi:ff-enriched}
Let $\Ef : \Ec_1 \onecell \Ec_2$ be an enriched functor.
\begin{itemize}
  \item We say that $\Ef$ is \conceptDef{fully faithful}{CategoryTheory.EnrichedCats.EnrichmentFunctor}{fully_faithful_enriched_functor} if for all objects $\x, \y : \Ec_1$ the morphism $\EfunImpl{\Ef}{\x}{\y}$ is an isomorphism.
  \item We say that $\Ef$ is \conceptDef{essentially surjective}{CategoryTheory.Core.Functors}{essentially_surjective} if its underlying functor is essentially surjective.
    That is to say, for all $\y : \Ec_2$ we have an inhabitant of $\trunc{\sigmatype{\x}{\Ec_1}{\iso{\app{\Ef}{\x}}{\y}}}$.
  \item We say that $\Ef$ is a \conceptDef{weak equivalence}{CategoryTheory.EnrichedCats.EnrichmentFunctor}{enriched_weak_equivalence} if $\Ef$ is both fully faithful and essentially surjective.
\end{itemize}
\end{defi}

Every enriched functor can be factorized as an essentially surjective functor followed by a fully faithful functor by taking the \emph{full image}.

\begin{exa}
\label{exa:image-enriched}
Let $\Ef : \Ec_1 \onecell \Ec_2$ be an enriched functor.
We define a predicate $\Pred$ on the objects of $\Ec_2$ such that
$
\app{\Pred}{\y} \defeq \trunc{\sigmatype{\x}{\Ec_1}{\iso{\app{\Ef}{\x}}{\y}}}. 
$
The \conceptDef{full image}{CategoryTheory.EnrichedCats.Examples.ImageEnriched}{image_enrichment} $\ImE{\Ef}$ of $\Ef$ is defined to be the full subcategory of $\Ec_2$ with respect to $\Pred$.
\end{exa}

\begin{propL}[\coqdocurl{Bicategories.OrthogonalFactorization.EnrichedEsoFactorization}{enriched_eso_ff_orthogonal}]
\label{prop:eso-ff-orthogonal}
Suppose that we have univalent enriched categories $\Ec_1, \Ec_2, \Ec_3$, and $\Ec_4$.
If we have enriched functors $\Ef : \Ec_1 \onecell \Ec_2$ and $\Eg : \Ec_3 \onecell \Ec_4$ such that $\Ef$ is essentially surjective and $\Eg$ is fully faithful,
then $\orthogonal{\Ef}{\Eg}$.
\end{propL}

\begin{problem}
\label{prob:enriched-factorization}
To construct an orthogonal factorization system on $\EnrichCat{\V}$.
\end{problem}

\begin{construction}{\coqdocurl{Bicategories.OrthogonalFactorization.EnrichedEsoFactorization}{enriched_eso_ff_orthogonal_factorization_system}}{prob:enriched-factorization}
\label{constr:image-factorization}
The classes $\Left$ and $\Right$ are given by the essentially surjective and the fully faithful enriched functors respectively.
The desired factorization is given by \cref{exa:image-enriched},
and the proof of orthogonality is given in \cref{prop:eso-ff-orthogonal}.
It remains to show that essentially surjective and fully faithful enriched functors are closed under enriched natural isomorphisms,
and this is so, because $\EnrichCat{\V}$ is locally univalent.
\end{construction}

From this factorization system, we directly obtain that weak equivalence are actually adjoint equivalences.

\begin{thrm}[\coqdocurl{Bicategories.OrthogonalFactorization.EnrichedEsoFactorization}{enriched_eso_ff_adjoint_equivalence}]
\label{thm:enriched-factorization}
Every fully faithful and essentially surjective enriched functor $\Ef : \Ec_1 \onecell \Ec_2$ is an adjoint equivalence.
\end{thrm}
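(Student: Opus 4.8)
The plan is to derive this theorem as a formal consequence of the orthogonal factorization system constructed in \cref{constr:image-factorization}, mimicking the standard fact that in an orthogonal factorization system on a 1-category the maps lying in both the left and right class are exactly the isomorphisms. First I would observe that a fully faithful and essentially surjective enriched functor $\Ef : \Ec_1 \onecell \Ec_2$ lies in both classes: it is essentially surjective by hypothesis (so in $\Left$), and it is fully faithful by hypothesis (so in $\Right$). By \cref{prop:eso-ff-orthogonal} this already gives $\orthogonal{\Ef}{\Ef}$, i.e.\ the square
\[
  \begin{tikzcd}
    {\homC{\EnrichCat{\V}}{\Ec_2}{\Ec_1}} & {\homC{\EnrichCat{\V}}{\Ec_1}{\Ec_1}} \\
    {\homC{\EnrichCat{\V}}{\Ec_2}{\Ec_2}} & {\homC{\EnrichCat{\V}}{\Ec_1}{\Ec_2}}
    \arrow["{\precomp{\Ef}}", from=1-1, to=1-2]
    \arrow["{\postcomp{\Ef}}", from=1-2, to=2-2]
    \arrow["{\postcomp{\Ef}}"', from=1-1, to=2-1]
    \arrow["{\precomp{\Ef}}"', from=2-1, to=2-2]
  \end{tikzcd}
\]
is a weak pullback, equivalently $\OFunc{\Ef}{\Ef}$ is an adjoint equivalence.

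Next I would extract a right inverse up to isomorphism from essential surjectivity of $\OFunc{\Ef}{\Ef}$. Consider the square with top $\id{\Ec_1}$, left $\Ef$, bottom $\Ef$, right $\Ef$; it commutes strictly (up to the associator/unitors), so it determines an object of the iso-comma category, and a diagonal filler is a 1-cell $\Eg : \Ec_2 \onecell \Ec_1$ together with invertible 2-cells $\Ef \cdot \Eg \cong \id{\Ec_2}$ and $\Eg \cdot \Ef \cong \id{\Ec_1}$ — the second coming from the commutativity $\Eg \cdot \Ef \cong \id{\Ec_1} \cdot \Ef = \Ef$ composed with the witness of the square. This already exhibits $\Ef$ as an equivalence (an invertible 1-cell with invertible unit and counit). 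To upgrade it to an \emph{adjoint} equivalence I would invoke the standard fact that any equivalence in a bicategory can be refined to an adjoint equivalence by adjusting one of the two invertible 2-cells (the usual ``2-out-of-3'' triangle-identity argument, available in the bicategory machinery of \cite{bicatspaper}); alternatively, since $\EnrichCat{\V}$ is univalent by \cref{thm:univalence-principle-enriched-cats} when $\V$ is, one could transport along the identity type, but the equivalence-to-adjoint-equivalence promotion does not even need univalence.

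I expect the main obstacle to be bookkeeping rather than conceptual: carefully setting up the object of the iso-comma category $\IsoComma{\precomp{\Ef}}{\postcomp{\Ef}}$ corresponding to the naturality square above (chasing through the definition of $\OFunc{\Ef}{\Ef}$, which involves the associator), and then verifying that the 2-cells produced by the diagonal filler are genuinely the unit and counit of an equivalence with the expected coherence. Fully faithfulness of $\OFunc{\Ef}{\Ef}$ is what guarantees the lift is essentially unique, which is morally why no choice is needed, but it is not strictly required for the bare existence statement; I would nonetheless use it if the cleanest route to the triangle identities is to compare two lifts. All of this is packaged cleanly by the abstract lemma that in an orthogonal factorization system the intersection $\Left \cap \Right$ consists of adjoint equivalences, so in the write-up I would state that lemma once, prove it by the argument just sketched, and then instantiate it with \cref{constr:image-factorization}.
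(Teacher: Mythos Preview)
Your approach is exactly the paper's: apply $\orthogonal{\Ef}{\Ef}$ to the identity square to obtain a diagonal filler that serves as a quasi-inverse, then promote the resulting equivalence to an adjoint equivalence. One slip: the square you should feed to the lifting property has top $\id[\Ec_1]$ and bottom $\id[\Ec_2]$ (not bottom $\Ef$, which does not even type-check as a map $\Ec_2 \onecell \Ec_2$); with that correction both invertible 2-cells $\Ef \cdot \Eg \cong \id$ and $\Eg \cdot \Ef \cong \id$ come directly from the two triangle fillings, and there is no need for the extra ``composed with the witness of the square'' manoeuvre you sketch.
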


\begin{proof}
Suppose that $\Ef : \Ec_1 \onecell \Ec_2$ is fully faithful and essentially surjective.
Consider the following diagram.
\[
  \begin{tikzcd}
    {\Ec_1} & {\Ec_1} \\
    {\Ec_2} & {\Ec_2}
    \arrow["\Ef"', from=1-1, to=2-1]
    \arrow["\Ef", from=1-2, to=2-2]
    \arrow["{\id{}}", from=1-1, to=1-2]
    \arrow["{\id{}}"', from=2-1, to=2-2]
    \arrow["l"{description}, dashed, from=2-1, to=1-2]
  \end{tikzcd}
\]
Due to the orthogonality of fully faithful and essentially surjective morphisms,
this diagram has a lift $l$ such that both triangles commute up to invertible 2-cell.
From this, we get that $\Ef$ is an equivalence,
and since equivalences can be refined to adjoint equivalences,
$\Ef$ is an adjoint equivalence.
\end{proof}

\begin{rem}
\label{rem:domain-univalence}
For the proof of \cref{thm:enriched-factorization},
we require both the domain and codomain of $\Ef$ to be univalent.
This restriction is a consequence of the bicategorical machinery,
because we phrase everything in the bicategory $\EnrichCat{\V}$ whose objects are univalent enriched categories.
In the case that only the domain of $\Ef$ is univalent, one could still use the same construction as in \cref{constr:image-factorization}.
\end{rem}

\section{The Rezk Completion}
\label{sec:enriched-rezk}
The next aspect in our study of univalent enriched categories, is the \emph{enriched Rezk completion}.
There are two features to a suitable Rezk completion for enriched categories.
The first feature of the Rezk completion that we consider, is that every enriched category is weakly equivalent to a univalent one.

\begin{problem}
\label{prob:enriched-rezk-completion}
Given a category $\Ec$ enriched over a univalent complete symmetric monoidal category $\V$,
to construct a univalent enriched category $\Rezk{\Ec}$ and a weak equivalence $\RezkFun : \Ec \onecell \Rezk{\Ec}$.
\end{problem}

We give two constructions for \Cref{prob:enriched-rezk-completion}.
The first construction (\Cref{constr:enriched-rezk-completion}) is similar to the Rezk completion of categories \cite{rezk_completion}.
More specifically, we can construct the enriched Rezk completion as the image of the Yoneda embedding.
This construction crucially relies on the assumption that $\V$ is both complete and monoidal closed,
which we need to construct an enrichment for the enriched functor category $\EFunctor{\OpE{\Ec}}{\self{\V}}$.
In addition, this construction increases the universe level in general.

The second construction (\Cref{constr:enriched-rezk-completion-HIT}) uses higher inductive types (HITs),
and we rely on the fact that the Rezk completion of ordinary categories can be desired as a HIT \cite{hottbook}.
Explicitly, if we have a category $\C$ enriched over a univalent monoidal category $\V$,
then we endow its Rezk completion (of categories) $\Rezk{\C}$ with a $\V$-enrichment.
To do so, we use the elimination rules of higher inductive types.
This allows us to avoid the assumption that $\V$ is both complete and symmetric monoidal closed,
and, if we also assume that the universe is closed under HITs,
the universe level remains the same.

We assume that $\V$ is univalent in both \Cref{constr:enriched-rezk-completion,constr:enriched-rezk-completion-HIT},
and we can extend both constructions to the case where $\V$ is not necessarily univalent.
To do so, we first take the Rezk completion of monoidal categories \cite{DBLP:conf/types/WullaertMA22} of $\V$ to obtain a weak equivalence $\RezkFun : \V \onecell \Rezk{\V}$.
Since $\RezkFun$ is fully faithful, it preserves underlying categories.
Hence, if we have a category $\Ec$ enriched over $\V$,
we obtain a category $\Change{\RezkFun}{\Ec}$ enriched over $\Rezk{\V}$ using \cref{exa:change-of-base}.
Then we can use either \Cref{constr:enriched-rezk-completion} or \Cref{constr:enriched-rezk-completion-HIT} to obtain the desired Rezk completion.

The second feature of the Rezk completion that we consider, is the universal property (\cref{thm:rezk-completion-ump}).
We formulate this universal property in such a way that we can instantiate it directly for both \Cref{constr:enriched-rezk-completion,constr:enriched-rezk-completion-HIT}.
More specifically, we say that every enriched functor from an enriched category $\Ec$ to some univalent enriched category
can be extended along weak equivalences.

\begin{thrm}[\coqdocurl{CategoryTheory.EnrichedCats.RezkCompletion.RezkUniversalProperty}{enriched_rezk_completion_ump}]
\label{thm:rezk-completion-ump}
Suppose that we have an enriched functor $\Ef : \Ec_1 \onecell \Ec_2$ that is both essentially surjective and fully faithful.
Then the precomposition functor $\precomp{\Ef} : \EFunctor{\Ec_2}{\Ec_3} \onecell \EFunctor{\Ec_1}{\Ec_3}$ is an adjoint equivalence of categories
for each univalent enriched category $\Ec_3$.
\end{thrm}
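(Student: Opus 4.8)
The plan is to show that $\precomp{\Ef}$ is fully faithful and essentially surjective, and then conclude that it is an adjoint equivalence. For the last step I use that a fully faithful and essentially surjective functor whose domain is univalent is an adjoint equivalence, together with the fact that $\EFunctor{\Ec_2}{\Ec_3}$ and $\EFunctor{\Ec_1}{\Ec_3}$ are univalent categories: since $\Ec_3$ is univalent, this is \cref{exa:enriched-functor-cat} combined with \cref{thm:univalence-principle-enriched-cats}. Here I use that $\Ef$ is a weak equivalence (as is, for instance, the functor $\RezkFun$ of \cref{constr:enriched-rezk-completion}); if in addition $\Ec_1$ were univalent, then $\Ef$ would itself be an adjoint equivalence by \cref{thm:enriched-factorization} and the claim would be immediate, so the point is precisely that we do \emph{not} assume $\Ec_1$ univalent, and hence cannot appeal to \cref{prop:eso-ff-orthogonal} directly.

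For full faithfulness, fix enriched functors $\Eg, \Eh : \Ec_2 \onecell \Ec_3$; I must show that restricting along $\Ef$ is an equivalence from enriched natural transformations $\Eg \twocell \Eh$ to enriched natural transformations $\Eg \circ \Ef \twocell \Eh \circ \Ef$. The observation is that an enriched natural transformation is a family of morphisms $\alpha_y : \app{\Eg}{y} \onecell \app{\Eh}{y}$ in $\Ec_3$, indexed by $y : \Ec_2$, subject to naturality and the enrichment condition of \cref{eq:nat-trans-enrichment}, and that for a \emph{fixed} such family both of these are propositions. So it suffices to show that such a family is uniquely determined by its restriction along $\Ef$, which is the familiar ``$\Ef$ is an epimorphism on objects'' argument: for $y : \Ec_2$ one merely chooses $x : \Ec_1$ with an isomorphism $\iso{\app{\Ef}{x}}{y}$ and conjugates the given component at $\app{\Ef}{x}$; independence of the choice and the enrichment condition follow because $\Ef$ is fully faithful (two such preimages are connected by an isomorphism in $\Ec_1$) and the resulting equations live in sets. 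No univalence of $\Ec_2$ is needed for this direction.

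Essential surjectivity is where the real work lies, and I expect it to be \emph{the main obstacle}. Given an enriched functor $\Eh : \Ec_1 \onecell \Ec_3$, I must produce an enriched functor $\Eg : \Ec_2 \onecell \Ec_3$ together with an enriched natural isomorphism $\iso{\Eg \circ \Ef}{\Eh}$, that is, a left extension of $\Eh$ along $\Ef$. The crux is the action of $\Eg$ on objects. For $y : \Ec_2$ the type $\sigmatype{x}{\Ec_1}{\iso{\app{\Ef}{x}}{y}}$ is merely inhabited by essential surjectivity of $\Ef$ but is not a proposition; however, the assignment $(x, \varphi) \mapsto \app{\Eh}{x}$ is coherently constant into the type of objects of $\Ec_3$, which is a $1$-type since $\Ec_3$ is univalent. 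Indeed, any two preimages yield an isomorphism between the corresponding objects of $\Ec_1$ by full faithfulness of $\Ef$, hence an isomorphism between their $\Eh$-images in $\Ec_3$, hence --- by univalence of $\Ec_3$ --- an identification of those images, and the coherence of these identifications under composition follows from functoriality of $\Eh$ and full faithfulness of $\Ef$; so the assignment descends to a well-defined object $\app{\Eg}{y} : \Ec_3$. One then equips $\Eg$ with its action on morphisms, with its hom-morphisms $\Ehom{\Ec_2}{y_1}{y_2} \onecell \Ehom{\Ec_3}{\app{\Eg}{y_1}}{\app{\Eg}{y_2}}$, and with the remaining data of a functor enrichment (\cref{def:functor-enrichment}) by transporting the corresponding data of $\Eh$ along the isomorphisms $\EfunImpl{\Ef}{x}{y}$ (invertible in $\V$ because $\Ef$ is fully faithful) and along the underlying full faithfulness of $\Ef$, again descending through the truncation because the targets are sets. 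The functor and enrichment axioms for $\Eg$ are then equations in sets and hold automatically, and $\iso{\Eg \circ \Ef}{\Eh}$ is witnessed on each $x : \Ec_1$ by the canonical preimage $(x, \id)$. The bulk of the difficulty, and the part I expect to require the most care in the formalization, is organizing this descent coherently: phrasing the coherently-constant extension of the object assignment through the propositional truncation using univalence of $\Ec_3$, and threading the isomorphisms needed for the hom-morphisms and structure morphisms so that every coherence is discharged at the level of $1$-types.
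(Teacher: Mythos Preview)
Your proposal is correct and follows essentially the same route as the paper: reduce to showing that $\precomp{\Ef}$ is fully faithful and essentially surjective, using univalence of $\Ec_3$ (hence of the functor categories) to conclude. The only cosmetic difference is that where you frame the object assignment of the extension as a coherently constant map into a $1$-type, the paper packages the same descent as contractibility of a suitable $\Sigma$-type (an object of $\Ec_3$ together with a compatible family of isomorphisms indexed by preimages); the underlying argument and the use of full faithfulness of $\Ef$ for the enrichment data are the same.
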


To understand \Cref{thm:rezk-completion-ump},
let us first note that $\precomp{\Ef} : \EFunctor{\Ec_2}{\Ec_3} \onecell \EFunctor{\Ec_1}{\Ec_3}$ is both essentially surjective and fully faithful
if it is an adjoint equivalence.
Essential surjectivity expresses that for each enriched functor $\Eg : \Ec_1 \onecell \Ec_3$,
we can find an enriched functor $\Eh : \Ec_2 \onecell \Ec_3$ and a natural isomorphism $\nt$ as follows.
\[
\begin{tikzcd}
  {\Ec_1} & {\Ec_3} \\
  {\Ec_2}
  \arrow[""{name=0, anchor=center, inner sep=0}, "\Eg", from=1-1, to=1-2]
  \arrow["\Ef"', from=1-1, to=2-1]
  \arrow[""{name=1, anchor=center, inner sep=0}, "\Eh"', dashed, from=2-1, to=1-2]
  \arrow["\nt"', shorten <=2pt, shorten >=2pt, Rightarrow, from=0, to=1]
\end{tikzcd}
\]
To understand fully faithfulness we assume that we have diagrams as follows
\[
\begin{tikzcd}
  {\Ec_1} & {\Ec_3} \\
  {\Ec_2}
  \arrow[""{name=0, anchor=center, inner sep=0}, "{\Eg_1}", from=1-1, to=1-2]
  \arrow["\Ef"', from=1-1, to=2-1]
  \arrow[""{name=1, anchor=center, inner sep=0}, "{\Eh_1}"', dashed, from=2-1, to=1-2]
  \arrow["\nt_1"', shorten <=2pt, shorten >=2pt, Rightarrow, from=0, to=1]
\end{tikzcd}
\quad \quad
\begin{tikzcd}
  {\Ec_1} & {\Ec_3} \\
  {\Ec_2}
  \arrow[""{name=0, anchor=center, inner sep=0}, "{\Eg_2}", from=1-1, to=1-2]
  \arrow["\Ef"', from=1-1, to=2-1]
  \arrow[""{name=1, anchor=center, inner sep=0}, "{\Eh_2}"', dashed, from=2-1, to=1-2]
  \arrow["\nt_2"', shorten <=2pt, shorten >=2pt, Rightarrow, from=0, to=1]
\end{tikzcd}
\]
where $\nt_1$ and $\nt_2$ are natural isomorphisms.
In that situation, each enriched natural transformation $\theta : \Eg_1 \twocell \Eg_2$
gives rise to a unique enriched natural transformation $\zeta : \Eh_1 \twocell \Eh_2$
satisfying the following equality.
\[
\begin{tikzcd}
  {\Ec_1} & {\Ec_2} & {\Ec_3}
  \arrow["\Ef"', from=1-1, to=1-2]
  \arrow[""{name=0, anchor=center, inner sep=0}, "{\Eg_1}", bend left=60, from=1-1, to=1-3]
  \arrow[""{name=1, anchor=center, inner sep=0}, "{\Eh_1}", shift left=3, from=1-2, to=1-3]
  \arrow[""{name=2, anchor=center, inner sep=0}, "{\Eh_2}"', shift right=3, from=1-2, to=1-3]
  \arrow["{\tau_1}", shorten <=3pt, Rightarrow, from=0, to=1-2]
  \arrow["\zeta", shorten <=2pt, shorten >=2pt, Rightarrow, from=1, to=2]
\end{tikzcd}
\quad = \quad
\begin{tikzcd}
  {\Ec_1} & {\Ec_2} & {\Ec_3}
  \arrow["\Ef"', from=1-1, to=1-2]
  \arrow[""{name=0, anchor=center, inner sep=0}, "{\Eg_1}", bend left=90, from=1-1, to=1-3]
  \arrow[""{name=1, anchor=center, inner sep=0}, "{\Eg_2}"{description}, bend left=40, from=1-1, to=1-3]
  \arrow["{\Eh_2}"', from=1-2, to=1-3]
  \arrow["{\nt_2}", shorten <=4pt, Rightarrow, from=1, to=1-2]
  \arrow["\theta", shorten <=2pt, shorten >=3pt, Rightarrow, from=0, to=1]
\end{tikzcd}
\]

\subsection{The Rezk Completion via Yoneda}
Now we give a first construction for \Cref{prob:enriched-rezk-completion},
and to do so, we use the Yoneda lemma.
Let us start by first defining enriched representable presheaves and the Yoneda embedding.

\begin{defi}
\label{def:yoneda-embedding}
Let $\V$ be a complete symmetric monoidal closed category,
and let $\Ec$ be a $\V$-enriched category.
Given an object $\y : \Ec$, we define the \conceptDef{representable functor}{CategoryTheory.EnrichedCats.Examples.Yoneda}{enriched_repr_presheaf} $\RepFun{\y} : \EFunctor{\OpE{\Ec}}{\self{\V}}$ as follows.
\begin{itemize}
  \item For objects $\x : \Ec$, we define $\app{\RepFun{\y}}{\x} \defeq \Ehom{\Ec}{\x}{\y}$;
  \item for morphisms $\f : \x_1 \onecell \x_2$, we define $\app{\RepFun{\y}}{\f}$ to be $\EPrecomp{\f} : \Ehom{\Ec}{\x_2}{\y} \onecell \Ehom{\Ec}{\x_1}{\y}$.
\end{itemize}
Given a morphism $\f : \y_1 \onecell \y_2$ in $\Ec$, we define the \conceptDef{representable natural transformation}{CategoryTheory.EnrichedCats.Examples.Yoneda}{enriched_repr_nat_trans} $\RepNat{\f} : \RepFun{\y_1} \onecell \RepFun{\y_2}$ to be $\EPostcomp{\f} : \Ehom{\Ec}{\x}{\y_1} \onecell \Ehom{\Ec}{\x}{\y_2}$ for every $x : \Ec$.

Finally, the \conceptDef{enriched Yoneda embedding}{CategoryTheory.EnrichedCats.Examples.Yoneda}{enriched_yoneda} $\Yon{\Ec} : \Ec \onecell \EFunctor{\OpE{\Ec}}{\self{\V}}$ is defined to be $\RepFun{\y}$ on objects $y : \Ec$ and $\RepNat{\f}$ on morphisms $\f : \y_1 \onecell \y_2$.
\end{defi}

Note that in \cref{def:yoneda-embedding},
one also needs to construct $\V$-enrichments for $\RepFun{\y}$ and $\Yon{\Ec}$,
and prove that $\RepNat{\f}$ is $\V$-enriched.
The details can be found in the literature \cite{kelly1982basic} and in the formalization.

\begin{propL}[\coqdocurl{CategoryTheory.EnrichedCats.YonedaLemma}{fully_faithful_enriched_yoneda}]
\label{prop:yoneda-lemma}
The enriched Yoneda embedding is fully faithful.
\end{propL}

\begin{construction}{\coqdocurl{CategoryTheory.EnrichedCats.RezkCompletion.EnrichedRezkCompletion}{enriched_rezk_completion_bundled}}{prob:enriched-rezk-completion}
\label{constr:enriched-rezk-completion}
We define $\Rezk{\Ec}$ to be the image of the Yoneda embedding $\Yon{\Ec}$.
The enriched functor $\RezkFun : \Ec \onecell \Rezk{\Ec}$ is essentially surjective by construction (\cref{exa:image-enriched}).
By the Yoneda lemma (\cref{prop:yoneda-lemma}), $\RezkFun$ is fully faithful as well.
\end{construction}

Note that \cref{constr:enriched-rezk-completion} might increase the universe level.
Let us assume that the type of objects of $\Ec$ and $\V$ live in $\UnivU$ and $\UnivV$ respectively.
Since objects of $\Rezk{\Ec}$ are enriched presheaves from $\Ec$ to $\self{\V}$ that are in the image of $\Yon{\Ec}$,
the type of objects of $\Rezk{\Ec}$ lives in $\maxU{\UnivU}{\UnivV}$.
In many examples, $\UnivV$ is a larger universe that $\UnivU$,
because we require the category $\V$ to have products indexed by the type of the objects in $\Ec$.

\subsection{The Rezk Completion via HITs}
Now we present another construction for \Cref{prob:enriched-rezk-completion},
but this time we use higher inductive types.
Recall that higher inductive types are specified by
a list of point constructors, path constructors, homotopy constructors, and so on.
Given a category $\C$, we are interested in the 1-truncated HIT $\RezkH{\C}$ with
a single point constructor $\rcl{\C} : \C \rightarrow \RezkH{\C}$,
a path constructor $\app{\rcleq{\C}}{\f} : \app{\rcl{\C}}{\x} = \app{\rcl{\C}}{\y}$ for each isomorphism $f : \iso{\x}{\y}$,
and homotopy constructors
$\app{\re}{\x} : \rcleq{\id{\x}} = \idpath$ for each $\x : \C$
and $\app{\app{\rconcat}{\f}}{\g} : \app{\rcleq{\C}}{(\f \cdot \g)} = \app{\rcleq{\C}}{\f} \cdot \app{\rcleq{\C}}{\g}$
for all isomorphisms $f : \iso{\x}{\y}$ and $g : \iso{\y}{\z}$.
We specify this type precisely using introduction and elimination rules.
Note that the desired HIT can be constructed using mild assumptions \cite{rijke:2018,VvdW}.

Before we give the exact rules of the HIT,
we introduce some notation.
Given a path $p : \x = \y$ in some type $\X$
and a type $\Y$ depending on $\X$,
we write $\xx =_{p} \yy$ for the type of paths from $\xx : \app{\Y}{\x}$ to $\yy : \app{\Y}{\y}$ over $p$.
We have a path $\idpathD[\xx] : \xx =_{\idpath[\x]} \xx$
and for all paths $\pp : \xx =_{p} \yy$ and $\qq : \yy =_{q} \zz$,
we have a concatenation $\pp \cdot \qq : \xx =_{p \cdot q} \zz$.
Finally,
given a homotopy $h : p = q$,
we have a type $\pp =_{h} \qq$ of homotopies over $h$
where $\pp : \xx =_{p} \yy$ and $\qq : \xx =_{q} \yy$.

\begin{figure}
\begin{center}
\begin{bprooftree}
\AxiomC{$\x : \C$}
\UnaryInfC{$\app{\rcl}{\x} : \RezkH{\C}$}
\end{bprooftree}
\begin{bprooftree}
\AxiomC{$\x, \y : \C$}
\AxiomC{$\f : \iso{\x}{\y}$}
\BinaryInfC{$\app{\rcleq}{\f} : \app{\rcl}{\x} = \app{\rcl}{\y}$}
\end{bprooftree}
\begin{bprooftree}
\AxiomC{}
\UnaryInfC{$\Rezk{\C}$ is a 1-type}
\end{bprooftree}
\end{center}
\begin{center}
\begin{bprooftree}
\AxiomC{$\x : \C$}
\UnaryInfC{$\app{\re}{\x} : \rcleq{\id{\x}} = \idpath$}
\end{bprooftree}
\begin{bprooftree}
\AxiomC{$\x, \y, \z : \C$}
\AxiomC{$\f : \iso{\x}{\y}$}
\AxiomC{$\g : \iso{\y}{\z}$}
\TrinaryInfC{$\app{\app{\rconcat}{\f}}{\g} : \app{\rcleq{\C}}{(\f \cdot \g)} = \app{\rcleq{\C}}{\f} \cdot \app{\rcleq{\C}}{\g}$}
\end{bprooftree}
\end{center}
\caption{Introduction rules for $\RezkH{\C}$}
\label{fig:rezk-hit-intro-rules}
\end{figure}

\begin{axiom}
\label{ax:rezk-completion-hit}
We assume that for each category $\C$ we have a type $\RezkH{\C}$ satisfying the introduction rules in \Cref{fig:rezk-hit-intro-rules}.
In addition, we assume if
\begin{itemize}
  \item we have a 1-type $\Y$ depending on $\RezkH{\C}$;
  \item an element $\app{r}{\x} : \app{\Y}{(\app{\rcl}{\x})}$ for each $\x : \C$;
  \item a path $\app{e}{\f} : \app{r}{\x} =_{\app{\rcleq}{\f}} \app{r}{\y}$ for each isomorphism $\f : \iso{\x}{\y}$;
  \item a homotopy $h_1 : e(\id{x}) =_{\app{\re}{\x}} \idpathD[\app{r}{\x}]$ for each $\x : \C$;
  \item a homotopy $h_2 : e(\f \cdot \g) =_{\app{\app{\rconcat}{\f}}{\g}} \app{e}{\f} \cdot \app{e}{\g}$ for all isomorphisms $\f : \iso{\x}{\y}$ and $\g : \iso{\y}{\z}$,
\end{itemize}
then we have a dependent function $\rInd{\Y}{r}{e}{h_1}{h_2} : \pitype{\x}{\RezkH{\C}}{\app{\Y}{\x}}$
for which the following equalities hold.
\[
  \app{\rInd{\Y}{r}{e}{h_1}{h_2}}{(\app{\rcl}{\x})} \equiv \app{r}{\x}
\]
\[
  \apd{\rInd{\Y}{r}{e}{h_1}{h_2}}{(\app{\rcleq}{\f})} = \app{e}{\f}
\]
\end{axiom}

\begin{construction}{\coqdocurlR{sem.rezk.enriched}{enriched_rezk_completion_bundled}}{prob:enriched-rezk-completion}
\label{constr:enriched-rezk-completion-HIT}
Let $\Ec$ be an $\V$-enrichment over $\C$.
We only show how to construct the necessary hom objects and the enriched identity,
because the other operations are defined in a similar way.
First,
we construct the hom objects $\Ehom{\RezkH{\C}}{\x}{\y} : \V$ for all $\x, \y : \RezkH{\C}$.
To do so, we use the elimination rule,
and thus it suffices to construct
\begin{itemize}
  \item for all $\x, \y : \C$ an object $v(\x, \y) : \V$;
  \item for all isomorphisms $\f : \iso{\x_1}{\x_2}$ an identity $\app{l}{(\f , \y)} : v(\x_1, \y) = v(\x_2, \y)$;
  \item for all isomorphisms $\g : \iso{\y_1}{\y_2}$ an identity $\app{r}{(\x , \g)} : v(\x, \y_1) = v(\x, \y_2)$
\end{itemize}
such that the following equations hold.
\[
\app{l}{(\id{\x}, \y)} = \idpath,
\quad \quad
\app{r}{(\x , \id{\y})} = \idpath,
\]
\[
\app{l}{(\f \cdot \g, \y)} = \app{l}{(\f, \y)} \cdot \app{l}{(\g, \y)},
\quad \quad
\app{r}{(\x , \f \cdot \g)} = \app{r}{(\x , \f)} \cdot \app{r}{(\x , \g)},
\]
\[
\app{l}{(\f , \y_1)} \cdot \app{r}{(\x_2 , \g)}
=
\app{r}{(\x_1 , \g)} \cdot \app{l}{(\f , \y_2)}
\]
For $v(\x, \y)$, we take $\Ehom{\Ec}{\x}{\y}$.
To construct $\f$ and $\g$,
we use that $\V$ is univalent,
meaning that it suffices to construct isomorphisms
$\iso{v(\x_1, \y)}{v(\x_2, \y)}$
and
$\iso{v(\x, \y_1)}{v(\x, \y_2)}$
for all $\f : \iso{\x_1}{\x_2}$ and $\g : \iso{\y_1}{\y_2}$.
For these we take
$\EPrecomp{\f}$ and $\EPostcomp{\g}$ respectively.

Next we show how to construct the enriched identity,
for which we again use the elimination rule.
It suffices to construct for each $\x : \C$ a morphism $i : \Ehom{\Ec}{\x}{\x}$
such that for each isomorphism $f : \iso{\x_1}{\x_2}$ the following diagram commutes.
\[
\begin{tikzcd}[column sep = 4em]
  {\munit[V]} & {\Ehom{\Ec}{\x_1}{\x_1}} & {\Ehom{\Ec}{\x_2}{\x_1}} \\
  & {\Ehom{\Ec}{\x_2}{\x_2}}
  \arrow["{\app{i}{\x_1}}", from=1-1, to=1-2]
  \arrow["{\app{i}{\x_2}}"', from=1-1, to=2-2]
  \arrow["{\EPrecomp{(\f^{-1})}}", from=1-2, to=1-3]
  \arrow["{\EPostcomp{\f}}", from=1-3, to=2-2]
\end{tikzcd}
\]
The desired morphism $\app{i}{\x}$ is defined to be $\EidImpl{\x}$.
\end{construction}

Note that in \Cref{constr:enriched-rezk-completion-HIT}
we only use that $\V$ is a monoidal category,
and we do not need $\V$ to be complete or symmetric monoidal closed.
This is in contrast to \Cref{constr:enriched-rezk-completion} where we use closedness of the monoidal structure and completeness
to formulate and prove the Yoneda lemma (\Cref{prop:yoneda-lemma}).

Now let us reconsider the unit enriched category and the tensor of enriched categories (\Cref{exa:unit-enriched,exa:tensor-enriched}).
We already argued that both these examples did not necessarily give rise to a univalent enriched category,
so we can take their Rezk completion to obtain univalent analogues.
However, it crucially matters whether we use \Cref{constr:enriched-rezk-completion} or \Cref{constr:enriched-rezk-completion-HIT} for the Rezk completion.

More specifically, suppose that we want to construct the symmetric monoidal bicategory of univalent enriched categories in some universe $\UnivU$ \cite{kelly1982basic}.
The unit and tensor of this monoidal structure are given by the Rezk completions \Cref{exa:unit-enriched,exa:tensor-enriched} respectively.
If we were to use \Cref{constr:enriched-rezk-completion},
then it is not guaranteed that our bicategory of enriched categories is closed under the tensor.
This is because the universe level of $\Rezk{\mult{\Ec_1}{\Ec_2}}$ generally increases,
and thus this enriched category does not necessarily live in the desired bicategory.

However, if we assume that our universe is closed under higher inductive types,
then we can actually construct the desired monoidal structure.
Given univalent enriched categories $\Ec_1$ and $\Ec_2$ in some universe $\UnivU$,
both $\mult{\Ec_1}{\Ec_2}$ and $\RezkH{\mult{\Ec_1}{\Ec_2}}$ also live in universe $\UnivU$.
We can say the same for $\RezkH{\UnitE}$, which allows us to show that $\EnrichCat{\V}$ is monoidal.

\subsection{The Universal Property}
Next we prove the universal property of the Rezk completion (\cref{thm:rezk-completion-ump}),
so we assume that we have a fully faithful and essentially surjective enriched functor $\Ef : \Ec_1 \onecell \Ec_2$
and a univalent enriched category $\Ec_3$.
To verify \cref{thm:rezk-completion-ump}, we use that $\Ec_3$ is univalent.
This implies that the categories $\EFunctor{\Ec_2}{\Ec_3}$ and $\EFunctor{\Ec_1}{\Ec_3}$ are both univalent,
and thus it suffices to check that $\precomp{\Ef} : \EFunctor{\Ec_2}{\Ec_3} \onecell \EFunctor{\Ec_1}{\Ec_3}$ is essentially surjective and fully faithful.
The proofs of \cref{thm:rezk-completion-ump-ff,thm:rezk-completion-ump-eso} have some overlap with the ordinary categorical case \cite[Theorem 8.4]{rezk_completion}.
However, here we must also check that the obtained functors and natural transformations actually are enriched.

\begin{lemL}[\coqdocurl{CategoryTheory.EnrichedCats.RezkCompletion.PrecompFullyFaithful}{enriched_rezk_completion_ump_fully_faithful}]
\label{thm:rezk-completion-ump-ff}
The functor $\precomp{\Ef} : \EFunctor{\Ec_2}{\Ec_3} \onecell \EFunctor{\Ec_1}{\Ec_3}$ is fully faithful.
\end{lemL}

\begin{proof}
The proof that $\precomp{\Ef} : \EFunctor{\Ec_2}{\Ec_3} \onecell \EFunctor{\Ec_1}{\Ec_3}$ is faithful,
is in essence the same as for ordinary categories \cite[Lemma 8.1]{rezk_completion},
so we only show that $\precomp{\Ef}$ is full.
Let $\Eg_1, \Eg_2 : \EFunctor{\Ec_2}{\Ec_3}$ be two enriched functors,
and suppose that we have an enriched transformation $\nt : \comp{\Ef}{\Eg_1} \twocell \comp{\Ef}{\Eg_2}$.
We show how to construct the desired enriched natural transformation $\theta : \Eg_1 \twocell \Eg_2$.

For all objects $\x : \Ec_2$ the following type is contractible.
\[
\sigmatype{\f}{\app{\Eg_1}{\x} \onecell \app{\Eg_2}{\x}}{\prod (\w : \Ec_1) (i : \iso{\app{\Ef}{\w}}{\x}), \app{\nt}{\w} \cdot \app{\Eg_2}{i} = \app{\Eg_1}{i} \cdot \f}
\]
The contractibility of this type follows from our assumption that $\Ef$ is essentially surjective.
From this, we obtain the data of the the desired transformation $\theta$.
The fact that $\theta$ is $\V$-enriched is shown by using \cref{eq:nat-trans-enrichment} and the fact that $\Ef$ is essentially surjective.
\end{proof}

\begin{lemL}[\coqdocurl{CategoryTheory.EnrichedCats.RezkCompletion.PrecompEssentiallySurjective}{enriched_rezk_completion_ump_essentially_surjective}]
\label{thm:rezk-completion-ump-eso}
The functor $\precomp{\Ef} : \EFunctor{\Ec_2}{\Ec_3} \onecell \EFunctor{\Ec_1}{\Ec_3}$ is essentially surjective.
\end{lemL}

\begin{proof}
Suppose that we have an enriched functor $\Eg : \EFunctor{\Ec_1}{\Ec_3}$.
We only demonstrate how to construct the desired enriched functor $\Eh : \EFunctor{\Ec_2}{\Ec_3}$.

Suppose that we have $\x : \Ec_2$.
Then there is a unique object $\y : \Ec_3$ and function $\varphi : \prod (w : \Ec_1) (i : \iso{\app{\Ef}{\x}}{\y}), \iso{\app{\Eg}{\w}}{\y}$
such that for all objects $\w_1, \w_2 : \Ec_1$,
isomorphisms $i_1 : \iso{\app{\Ef}{\w_1}}{\x}$ and $i_2 : \iso{\app{\Ef}{\w_2}}{\x}$,
and morphisms $k : \w_1 \onecell \w_2$ satisfying $\app{\Ef}{k} \cdot i_2 = i_1$,
we have $\app{\Eg}{k} \cdot \app{\app{\varphi}{\w_2}}{i_2} = \app{\app{\varphi}{\w_1}}{i_1}$.
Uniqueness follows from the fact that $\Ef$ is fully faithful,
and the desired element is constructed by using that $\Ef$ is essentially surjective.
One can show that the obtained action on objects gives rise to a functor $H$ from the underlying category of $\Ec_2$ to that of $\Ec_3$.
We also have isomorphisms $\app{\app{\varphi}{\w}}{i} : \iso{\app{\Eg}{\w}}{\app{H}{\x}}$ for all $w : \Ec_1$ and $i : \iso{\app{\Ef}{\x}}{\y}$.

Next we construct an enrichment for this functor.
Suppose, that we have two objects $\x, \y : \Ec_2$.
Then there is a unique morphism $f : \Ehom{\Ec_2}{\x}{\y} \onecell \Ehom{\Ec_3}{\app{H}{\x}}{\app{H}{\y}}$ in $\V$
such that for all objects $w_1, w_2 : \Ec_2$
and isomorphisms $i_1 : \iso{\app{\Ef}{\w_1}}{\x}$ and $i_2 : \iso{\app{\Ef}{\w_2}}{\x}$,
$f$ is equal to the following composition of morphisms
\[
  \begin{tikzcd}
    {\Ehom{\Ec_2}{\x}{\y}} &[4em] {\Ehom{\Ec_2}{\app{\Ef}{\w_1}}{\y}} &[1em] {\Ehom{\Ec_2}{\app{\Ef}{\w_1}}{\app{\Ef}{\w_2}}} & {\Ehom{\Ec_1}{\w_1}{\w_2}} \\
    {\Ehom{\Ec_3}{\app{\Eg}{\w_1}}{\app{\Eg}{\w_2}}} & {\Ehom{\Ec_3}{\app{H}{x}}{\app{\Eg}{\w_2}}} & {\Ehom{\Ec_3}{\app{H}{x}}{\app{H}{y}}}
    \arrow["{\EPrecomp{i_1}}", from=1-1, to=1-2]
    \arrow["{\EPrecomp{(i_2^{-1})}}", from=1-2, to=1-3]
    \arrow["{(\Efun{\Ef}{\w_1}{\w_2})^{-1}}", from=1-3, to=1-4]
    \arrow["{\Efun{\Eg}{\w_1}{\w_2}}"{description}, from=1-4, to=2-1]
    \arrow["{\EPrecomp{((\app{\app{\varphi}{\w_1}}{i_1})^{-1})}}"', from=2-1, to=2-2]
    \arrow["{\EPostcomp{\app{\app{\varphi}{\w_2}}{i_2}}}"', from=2-2, to=2-3]
  \end{tikzcd}
\]
This follows from the fact that $\Ef$ is a weak equivalence.
As such, we get the desired enriched functor $\Eh : \EFunctor{\Ec_2}{\Ec_3}$.
\end{proof}

Let us finish this section by recalling \emph{enriched profunctors}.
Enriched profunctors are defined to be enriched functors $\mult{\OpE{\Ec_2}}{\Ec_1} \onecell \self{\V}$.
As discussed before, $\mult{\OpE{\Ec_2}}{\Ec_1}$ is not necessarily univalent.
However,
we can still define enriched profunctors using only univalent enriched categories.
If $\V$ is univalent,
then enriched functors $\mult{\OpE{\Ec_2}}{\Ec_1} \onecell \self{\V}$
correspond to enriched functors $\Rezk{\mult{\OpE{\Ec_2}}{\Ec_1}} \onecell \self{\V}$ by the universal property of the Rezk completion,
as indicated in diagram below.
\[
\begin{tikzcd}
  {\mult{\OpE{\Ec_2}}{\Ec_1}} & {\self{\V}} \\
  {\Rezk{\mult{\OpE{\Ec_2}}{\Ec_1}}}
  \arrow[""{name=0, anchor=center, inner sep=0}, "\Ef", from=1-1, to=1-2]
  \arrow["\RezkFun"', from=1-1, to=2-1]
  \arrow[""{name=1, anchor=center, inner sep=0}, "\Eg"', dashed, from=2-1, to=1-2]
  \arrow[shorten <=3pt, shorten >=3pt, Rightarrow, from=0, to=1]
\end{tikzcd}
\]

\section{Enriched Monads}
\label{sec:enriched-monads}
We end our study of univalent enriched categories by looking at enriched monads.
More specifically, we discuss Kleisli objects (\cref{constr:enriched-kleisli-cat}) in the bicategory of enriched categories.
At first glance, it might not seem that univalence plays an interesting role,
but upon closer look, this question is rather subtle.

Usually, the Kleisli category of a monad $\T$ on a category $\C$ is defined to be the category
whose objects are objects of $\C$ and whose morphisms from $\x$ to $\y$ are morphisms $\x \onecell \app{\T}{\y}$ in $\C$.
We denote this category by $\FKleisli{\T}$.
In general, this category is not univalent (for example the constant monad on the unit set).
This situation can be rectified by defining the Kleisli category in a slightly different way \cite{univalence-principle},
namely as the image of the free algebra functor from $\C$ to the Eilenberg-Moore category $\EM{\T}$ of $\T$.
The resulting univalent category is denoted by $\Kleisli{\T}$.
To derive the usual theorems about Kleisli categories, one can instantiate the formal theory of monads \cite{MR1935981,street1972formal,Weide23},
meaning that it suffices to prove the universal property for Kleisli objects.
Proving the desired universal property is a nice exercise using the universal property of the Rezk completion (\cref{thm:rezk-completion-ump}).

The key notion of this section, enriched monads, can be defined concisely as monads internal to $\EnrichCat{\V}$.
Recall that monads in bicategories are defined as follows.

\begin{defi}
\label{def:monad}
Let $\B$ be a bicategory.
A \conceptDef{monad $m$ in $\B$}{Bicategories.DisplayedBicats.Examples.MonadsLax}{mnd} is given by
\begin{itemize}
  \item an object $\monadob{m} : \B$;
  \item a 1-cell $\monadendofull{m} : \monadob{m} \onecell \monadob{m}$;
  \item a 2-cell $\monadunit{m} : \id[\monadob{m}] \twocell \monadendo{m}$;
  \item a 2-cell $\monadmult{m} : \comp{\monadendo{m}}{\monadendo{m}} \twocell \monadendo{m}$.
\end{itemize}
such that the following diagrams commute.
\[
\begin{tikzcd}
  {\monadendofull{m}} & {\comp{\monadendofull{m}}{\id{x}}} &[3em] {\comp{\monadendofull{m}}{\monadendofull{m}}} &[3em] {\comp{\id}{\monadendofull{m}}} & {\monadendofull{m}} \\
  && {\monadendofull{m}}
  \arrow["{\monadmult{m}}", from=1-3, to=2-3, Rightarrow]
  \arrow["{\monadendofull{m} \whiskerl \monadunit{m}}", from=1-2, to=1-3, Rightarrow]
  \arrow["{\rinvunitor{}}", from=1-1, to=1-2, Rightarrow]
  \arrow["{\id}"', from=1-1, to=2-3, Rightarrow]
  \arrow["{\linvunitor{}}"', from=1-5, to=1-4, Rightarrow]
  \arrow["{\monadunit{m} \whiskerr \monadendofull{m}}"', from=1-4, to=1-3, Rightarrow]
  \arrow["{\id}", from=1-5, to=2-3, Rightarrow]
\end{tikzcd}
\]
\[
\begin{tikzcd}
  {\comp{\monadendofull{m}}{(\comp{\monadendofull{m}}{\monadendofull{m}})}} &[5em] & {\comp{\monadendofull{m}}{\monadendofull{m}}} \\
  {\comp{(\comp{\monadendofull{m}}{\monadendofull{m}})}{\monadendofull{m}}} & {\comp{\monadendofull{m}}{\monadendofull{m}}} & {\monadendofull{m}}
  \arrow["{\monadmult{m}}", from=1-3, to=2-3, Rightarrow]
  \arrow["{\monadmult{m}}"', from=2-2, to=2-3, Rightarrow]
  \arrow["{\monadmult{m} \whiskerr \monadendofull{m}}"', from=2-1, to=2-2, Rightarrow]
  \arrow["{\lassociator{}{}{}}"', from=1-1, to=2-1, Rightarrow]
  \arrow["{\monadendofull{m} \whiskerl \monadmult{m}}", from=1-1, to=1-3, Rightarrow]
\end{tikzcd}
\]
Here $\lunitor{}$ and $\runitor{}$ are the left and right unitors of $\B$, and $\lassociator{}{}{}$ is the associator of $\B$.
\end{defi}

For enriched categories, one can further unfold this definition and phrase it in terms of enrichments.
This results in the notion of \emph{enrichments for monads}.

\begin{defi}
\label{def:enriched-monad}
Suppose that we have a category $\C$, a monad $\T$ on $\C$, and a $\V$-enrichment for $\C$.
Then a \conceptDef{$\V$-enrichment}{CategoryTheory.EnrichedCats.EnrichmentMonad}{monad_enrichment} for $\T$ consists of a $\V$-enrichment for the endofunctor $\T$
such that the unit $\unitM{\T}$ and multiplication $\muM{\T}$ are $\V$-enriched natural transformations.
\end{defi}

When we say \emph{enriched monad}, we mean a monad together with an enrichment.
In the remainder of this section,
we are concerned with \emph{Kleisli objects} in the bicategory of enriched categories.
To define Kleisli objects, we first define their cocones.
Note that for these definitions, we talk about arbitrary bicategories $\B$ and monads internal to $\B$.

\begin{defi}
Let $\B$ be a bicategory and let $m$ be a monad in $\B$.
A \conceptDef{Kleisli cocone}{Bicategories.Colimits.KleisliObjects}{kleisli_cocone} $k$ for $m$ in $\B$ consists of an object $\klob{k} : \B$, a 1-cell $\klmor{k} : \monadob{m} \onecell \klob{k}$, and a 2-cell $\klcell{k} : \comp{\monadendo{m}}{\klmor{k}} \twocell \klmor{k}$ such that the following diagrams commute.
\[
\begin{tikzcd}[column sep = huge]
  {\comp{\id[\monadob{m}]}{\klmor{k}}} & {\comp{\monadendo{m}}{\klmor{k}}} \\
  & {\klmor{k}}
  \arrow["{\monadunit{m} \whiskerr \klmor{k}}", from=1-1, to=1-2, Rightarrow]
  \arrow["{\klcell{k}}", from=1-2, to=2-2, Rightarrow]
  \arrow["{\lunitor{\klmor{k}}}"', from=1-1, to=2-2, Rightarrow]
\end{tikzcd}
\]
\[
\begin{tikzcd}[column sep = large]
  {\comp{(\comp{\monadendo{m}}{\monadendo{m}})}{\klmor{k}}} & {\comp{\monadendo{m}}{(\comp{\monadendo{m}}{\klmor{k}})}} & {\comp{\monadendo{m}}{\klmor{k}}} \\
  {\comp{\monadendo{m}}{\klmor{k}}} && {\klmor{k}}
  \arrow["{\monadmult{m} \whiskerr \klmor{k}}"', from=1-1, to=2-1, Rightarrow]
  \arrow["{\rassociator{\monadendo{m}}{\monadendo{m}}{\klmor{k}}}", from=1-1, to=1-2, Rightarrow]
  \arrow["{\monadendo{m} \whiskerl \klcell{k}}", from=1-2, to=1-3, Rightarrow]
  \arrow["{\klcell{k}}", from=1-3, to=2-3, Rightarrow]
  \arrow["{\klcell{k}}"', from=2-1, to=2-3, Rightarrow]
\end{tikzcd}
\]
\end{defi}

\begin{defi}
A Kleisli cocone $k$ is \conceptDef{universal}{Bicategories.Colimits.KleisliObjects}{has_kleisli_ump} if the following conditions are satisfied.
\begin{itemize}
\item For every Kleisli cocone $q$ there is a 1-cell $\klumpmor{q} : \klob{k} \onecell \klob{q}$ and an invertible 2-cell $\klumpcom{q} : \comp{\klmor{k}}{\klumpmor{q}} \twocell \klmor{q}$ such that the following diagram commutes.
  \[
    \begin{tikzcd}[column sep = huge]
      {\comp{\monadendo{m}}{(\comp{\klmor{k}}{\klumpmor{q}})}} && {\comp{\monadendo{m}}{\klmor{q}}} \\
      {\comp{(\comp{\monadendo{m}}{\klmor{k}})}{\klumpmor{q}}} & {\comp{\klmor{k}}{\klumpmor{q}}} & {\klmor{q}}
      \arrow["{\monadendo{m} \whiskerl \klumpcom{q}}", from=1-1, to=1-3, Rightarrow]
      \arrow["{\lassociator{\monadendo{m}}{\klmor{k}}{\klumpmor{q}}}"', from=1-1, to=2-1, Rightarrow]
      \arrow["{\klcell{k} \whiskerr \klumpmor{q}}"', from=2-1, to=2-2, Rightarrow]
      \arrow["{\klumpcom{q}}"', from=2-2, to=2-3, Rightarrow]
      \arrow["{\klcell{q}}", from=1-3, to=2-3, Rightarrow]
    \end{tikzcd}
  \]
\item Suppose that we have an object $x : \B$, two 1-cells $g_1, g_2 : \klob{k} \onecell x$, and a 2-cell $\tau : \comp{\klmor{k}}{g_1} \twocell \comp{\klmor{k}}{g_2}$ such that the following diagram commutes.
  \[
    \begin{tikzcd}[column sep = huge]
      {\comp{\monadendo{m}}{(\comp{\klmor{k}}{g_1})}} & {\comp{(\comp{\monadendo{m}}{\klmor{k}})}{g_1}} & {\comp{\klmor{k}}{g_1}} \\
      {\comp{\monadendo{m}}{(\comp{\klmor{k}}{g_2})}} & {\comp{(\comp{\monadendo{m}}{\klmor{k}})}{g_2}} & {\comp{\klmor{k}}{g_2}}
      \arrow["{\lassociator{\monadendo{m}}{\klmor{k}}{g_1}}", from=1-1, to=1-2, Rightarrow]
      \arrow["{\klcell{k} \whiskerr g_1}", from=1-2, to=1-3, Rightarrow]
      \arrow["\tau", from=1-3, to=2-3, Rightarrow]
      \arrow["{\monadendo{m} \whiskerl \tau}"', from=1-1, to=2-1, Rightarrow]
      \arrow["{\lassociator{\monadendo{m}}{\klmor{k}}{g_2}}"', from=2-1, to=2-2, Rightarrow]
      \arrow["{\klcell{k} \whiskerr g_2}"', from=2-2, to=2-3, Rightarrow]
    \end{tikzcd}
  \]
  Then there is a unique 2-cell $\klumpcell{\tau} : g_1 \twocell g_2$ such that $\klcell{k} \whiskerl \klumpcell{\tau} = \tau$.
\end{itemize}
We say that a bicategory \conceptDef{has Kleisli objects}{Bicategories.Colimits.KleisliObjects}{has_kleisli} if there is a universal Kleisli cocone for every monad $m$.
\end{defi}

As discussed before,
there are multiple ways to define Kleisli categories.
We first define an enrichment for $\FKleisli{\T}$.

\begin{exa}
\label{exa:kleisli-cat-enrichment}
Let $\T$ be an enriched monad on an enriched category $\Ec$.
We define a \conceptDef{$\V$-enrichment $\FKleisliE{\T}$ for $\FKleisli{\T}$}{CategoryTheory.EnrichedCats.Examples.KleisliEnriched}{Kleisli_cat_monad_enrichment} as follows.
\begin{itemize}
  \item We define $\Ehom{\FKleisliE{\T}}{\x}{\y}$ to be $\Ehom{\Ec}{\x}{\app{\T}{\y}}$.
  \item We define $\EidImpl{\x}$ to be $\EFromArr{\app{\unitM{\T}}{\x}}$.
  \item We define $\EcompImpl{\x}{\y}{\z}$ as the following composition of morphisms.
    \[\begin{tikzcd}
	{\mult{\Ehom{\Ec}{\y}{\app{\T}{\z}}}{\Ehom{\Ec}{\x}{\app{\T}{\y}}}} & {\mult{\Ehom{\Ec}{\app{\T}{\y}}{\app{\T}{(\app{\T}{\z})}}}{\Ehom{\Ec}{\x}{\app{\T}{\y}}}} &[-4pt] {\Ehom{\Ec}{\x}{\app{\T}{(\app{\T}{\z})}}} & {\Ehom{\Ec}{\x}{\app{\T}{\z}}}
	\arrow["{\Ecomp{\x}{\app{\T}{\y}}{\app{\T}{(\app{\T}{\z})}}}", from=1-2, to=1-3]
	\arrow["{\mult{\Efun{\T}{\y}{\app{\T}{\z}}}{\id{}}}", from=1-1, to=1-2]
	\arrow["{\EPostcomp{(\app{\muM{\T}}{\z})}}", from=1-3, to=1-4]
      \end{tikzcd}\]
\end{itemize}
The operations $\EFromArr{\f}$ and $\EToArr{\f}$ in $\FKleisliE{\T}$ are inherited from $\Ec$.
\end{exa}

Next we define an enrichment for $\Kleisli{\T}$.
Since $\Kleisli{\T}$ is defined as a full subcategory of the Eilenberg-Moore category $\EM{\T}$,
we define an enrichment for $\EM{\T}$ first.

\begin{exa}
\label{exa:em-cat-enrichment}
Suppose that $\V$ has equalizers, and let $\T$ be an enriched monad on an enriched category $\Ec$.
Note that we can define the Eilenberg-Moore category of $\T$ as a full subcategory of $\Dialg{\T}{\id}$.
By \cref{exa:dialgebras-enrichment,exa:full-sub-enrichment} we obtain the desired \conceptDef{$\V$-enrichment $\EME{\T}$ on $\EM{\T}$}{CategoryTheory.EnrichedCats.Examples.EilenbergMooreEnriched}{eilenberg_moore_enrichment}.
\end{exa}

Using \cref{exa:em-cat-enrichment} one can show that $\EnrichCat{\V}$ has Eilenberg-Moore objects.
In general, we have an enriched functor $\freealg{\T} : \Ec \onecell \EME{\T}$.
This functor sends every object $\x$ to the free algebra $\app{\T}{\x}$.
Now we define an enrichment for $\Kleisli{\T}$.

\begin{exa}
\label{exa:univ-kleisli-cat-enrichment}
Suppose that $\V$ is a monoidal category with equalizers,
and let $\T$ be an enriched monad on an enriched category $\Ec$.
Note that $\Kleisli{\T}$ is constructed as a full subcategory of the Eilenberg-Moore category,
and thus by \cref{exa:em-cat-enrichment} we obtain the \conceptDef{$\V$-enrichment $\KleisliE{\T}$ for $\Kleisli{\T}$}{CategoryTheory.EnrichedCats.Examples.UnivalentKleisliEnriched}{kleisli_cat_enrichment}.
\end{exa}

The category defined in \cref{exa:univ-kleisli-cat-enrichment} is univalent if we assume $\Ec$ to be univalent.
This is because the Eilenberg-Moore category of a monad on a univalent category is always univalent
and because univalence is preserved under full subcategories.
In addition, note that in \cref{exa:univ-kleisli-cat-enrichment} we assume that $\V$ has equalizers,
whereas in \cref{exa:kleisli-cat-enrichment}, we do not.

We finish this section by showing that $\KleisliE{\T}$ satisfies the required universal property.
The main idea behind the proof is that we have a weak equivalence $\kleislifunctor{\T} : \FKleisliE{\T} \onecell \KleisliE{\T}$,
and this weak equivalence allows use to instantiate \cref{thm:rezk-completion-ump}.

\begin{problem}
\label{prob:enriched-kleisli-cat}
Given a monoidal category $\V$ with equalizers, to construct Kleisli objects in the bicategory $\EnrichCat{\V}$.
\end{problem}

\begin{construction}{\coqdocurl{Bicategories.Colimits.Examples.BicatOfEnrichedCatsColimits}{bicat_of_enriched_cats_has_kleisli}}{prob:enriched-kleisli-cat}
\label{constr:enriched-kleisli-cat}
Given an enriched monad $\T$ on $\Ec$,
the Kleisli object of $\T$ in $\EnrichCat{\V}$ is given by $\KleisliE{\T}$.
The main work lies in verifying the universal property.
This check happens in three steps.

First, we define a weak equivalence $\kleislifunctor{\T} : \FKleisliE{\T} \onecell \KleisliE{\T}$.
This enriched functor sends every object $\x$ to the free algebra on $\x$.
The action on morphisms is given by the following composition
\[
  \begin{tikzcd}
    {\Ehom{\Ec}{\x}{\app{\T}{\y}}} & {\Ehom{\Ec}{\app{\T}{\x}}{\app{\T}{(\app{\T}{\y})}}} &[3em] {\Ehom{\Ec}{\app{\T}{\x}}{\app{\T}{\y}}}
    \arrow["{\Efun{\T}{\x}{\y}}", from=1-1, to=1-2]
    \arrow["{\EPostcomp{(\muM{\y})}}", from=1-2, to=1-3]
  \end{tikzcd}
\]

Second, we check that $\FKleisliE{\T}$ gives rise to Kleisli objects in the bicategory of (not necessarily univalent) enriched categories.
For this, one can use the same proof as used, for example, by Street \cite[Theorem 15]{street1972formal}.

Third, we conclude that the universal property also holds for $\KleisliE{\T}$,
and we only show how to construct 1-cells arising from the mapping property.
Suppose, that we have some Kleisli cocone $q$ in $\EnrichCat{\V}$.
We get an enriched functor $\Ef : \FKleisliE{\T} \onecell \klob{q}$.
From \cref{thm:rezk-completion-ump}, we get the desired 1-cell $\tilde{\Ef} : \KleisliE{\T} \onecell \klob{q}$.
\end{construction}

Note the similarities between \cref{constr:enriched-kleisli-cat} and the construction of Kleisli objects for univalent categories \cite[Construction 6.10]{Weide23}.


\section{Conclusion}
\label{sec:conclusion}
In this paper, we studied univalent enriched categories,
and we discussed several aspects of their study.
Our notion of univalent enriched category was based on enrichments,
and we viewed enriched categories as a category together with an enrichment.
First, we proved a structure identity principle for univalent enriched categories,
which we formulated using univalent bicategories.
The proof used displayed bicategories.
Second, we showed that all weak equivalences between univalent enriched categories
are adjoint equivalences.
Here we made use of orthogonal factorization systems.
Third, we discussed the Rezk completion of enriched categories,
which we constructed using the Yoneda lemma.
We also used the Rezk completion to construct Kleisli objects in the bicategory of univalent enriched categories.

Along the way, we saw a couple of interesting points where univalence interacted with enrichment.
When we defined the change-of-base operation in \cref{exa:change-of-base},
we restricted ourselves to lax functors that preserve underlying categories.
This was to guarantee that the resulting category would remain univalent.
In addition, we assumed that the monoidal category $\V$ has equalizers in the construction of the univalent Kleisli category (\cref{exa:univ-kleisli-cat-enrichment}).

There are several ways to extend the results in this paper.
A wide variety of notions in category theory can be defined internally to a bicategory.
However, for enriched categories, these internal notions are not always the correct ones.
For example, the notion of a fully faithful 1-cell can be defined internally to a bicategory using a representable definition,
but the obtained notion does not correspond to the one given in \cref{defi:ff-enriched}.
To obtain the desired notions, one could use the theory of equipments \cite{proarrows},
and one interesting extension of this work would be to develop the equipment of enriched categories.
Such work would build forth upon recent work on univalent double (bi)categories \cite{insights-paper,DBLP:journals/corr/abs-2310-09220,MR2844536}.
Another interesting extension would be formalizing applications of enriched categories, such as models of the enriched effect calculus \cite{EggerMS14} or enriched profunctor optics \cite{DBLP:journals/corr/abs-2001-07488}.
Finally,
our notion of structure that supports smash products is limited in its applicability,
due to the fact that we construct smash products using quotients of sets.
As mentioned in \Cref{rem:dcpo-smash-prod},
this notion is not suitable to construct smash products of DCPOs.
A question for future research would how to generalize this notion,
so that it encapsulates a wider range of examples.
Such a generalization would require one to use quotients in the appropriate category
instead of sets.

\section*{Acknowledgments}

The author thanks the anonymous reviewers of HoTT/UF, FSCD, and LMCS for their useful comments,
Nima Rasekh for useful discussions,
and Dan Frumin and Kobe Wullaert for proof reading earlier versions of this paper.
The author also thanks the Rocq developers for providing the Rocq proof assistant and their continuous support to keep \UniMath compatible with Rocq.
This research was supported by the NWO project “The Power of Equality” OCENW.M20.380, which is financed by the Dutch Research Council (NWO).

\bibliographystyle{alphaurl}
\bibliography{literature}

\end{document}